\title{\textbf{Improved rate-distance trade-offs for quantum codes with restricted connectivity}}
\author[1]{Nou\'edyn Baspin}
\affil[1]{Centre for Engineered Quantum Systems, School of Physics, University of Sydney, New South Wales 2006, Australia}
\author[2]{Venkatesan Guruswami}
\affil[2]{Departments of EECS and Mathematics, and the Simons Institute for the Theory of Computing, UC Berkeley, Berkeley, CA, 94709, USA}
\author[3]{Anirudh Krishna}
\affil[3]{Department of Computer Science and Stanford Institute for Theoretical Physics, Stanford University, Stanford, CA, 94305, USA}
\author[4]{Ray Li}
\affil[4]{Department of EECS, UC Berkeley, Berkeley, CA, 94709, USA}
\date{June 2023}
\begin{document}

\maketitle

\begin{abstract}
    For quantum error-correcting codes to be realizable, it is important that the qubits subject to the code constraints exhibit some form of limited connectivity. 
    The works of Bravyi \& Terhal~\cite{bravyi2009no} (BT) and Bravyi, Poulin \& Terhal~\cite{bravyi2010tradeoffs} (BPT) established that geometric locality constrains code properties---for instance  $\dsl n,k,d \dsr$ quantum codes defined by local checks on the $D$-dimensional lattice must obey $k d^{2/(D-1)} \le O(n)$.
    Baspin and Krishna~\cite{baspin2022connectivity} studied the more general question of how the connectivity graph associated with a quantum code constrains the code parameters.
    These trade-offs apply to a richer class of codes compared to the BPT and BT bounds, which only capture geometrically-local codes.
    We extend and improve this work, establishing a tighter dimension-distance trade-off as a function of the size of separators in the connectivity graph.
    We also obtain a distance bound that covers all stabilizer codes with a particular separation profile, rather than only LDPC codes.
\end{abstract}

\section{Introduction}
We refer to an $\dsl n,k,d \dsr$ quantum code $\cQ$ when the code $\cQ$ uses $n$ physical qubits to encode $k$ qubits.
In other words, $\cQ$ is a $2^k$-dimensional subspace of $\bbC^{2^n}$.
Furthermore, the code is robust to any set of erasures of size strictly less than the distance $d$.
One way to represent the error correcting code $\cQ$ is using a set of generators $\cS = \{\ssS_1,...,\ssS_m\}$, a commuting set of Pauli operators.
The code space is the simultaneous $+1$-eigenspace of these generators.
The generators play the role of parity checks for classical codes; measuring the generators (via a syndrome-extraction circuit) yields a binary string $\bs = (s_1,...,s_m)$ which can be used to deduce and correct errors.

A coarse way of capturing the relationship between the generators $\cS$ and the qubits of $\cQ$ is via the \emph{connectivity graph} $G$.
The vertices $V$ of the connectivity graph $G$ correspond to the $n$ qubits of $\cQ$ and two qubits $q,q'$ are connected by an edge if they are both in the support of some generator $\ssS \in \cS$.
This relationship is coarse in that we discard information about what Pauli operator acts on a qubit---if qubits $q$ and $q'$ are in the support of $\ssS$, we draw an edge between $q$ and $q'$ regardless of whether the restriction of $\ssS$ to $q$ is $\ssX$ or $\ssZ$.
Nevertheless, there is an intimate relationship between the code $\cQ$ and some graph invariants of $G$.
This is the central motivation in our paper.

In addition to being a useful theoretical tool, the study of graph invariants of $G$ can impact the design and implementation of the syndrome-extraction circuit for $\cQ$ in two dimensions.
In this case, the edges of the graph correspond to two-qubit gates.
Physical architectures might be limited in many ways---for example, in the length of reliable two-qubit gates that can be implemented, the number of two-qubit gates a qubit can be involved in or the number of two-qubit gates that can be implemented in parallel.

The exploration of the relationship between codes and the associated connectivity graphs begins with the seminal work of Bravyi \& Terhal \cite{bravyi2009no}.
They studied \emph{geometrically-local} codes, i.e.\ codes for which qubits are embedded in $D$-dimensional Euclidean space and all generators $\ssS_i \in \cS$ are supported within a ball of radius $w = O(1)$.
Bravyi \& Terhal proved that the distance of any $D$-dimensional local code is limited---they showed that
\begin{align*}
    d = O(n^{1-1/D})~.
\end{align*}
In a beautiful paper shortly thereafter, Bravyi, Poulin \& Terhal \cite{bravyi2010tradeoffs} showed geometric locality also imposes a sharp trade-off between $k$ and $d$.
When $D=2$, they showed that
\begin{align}
\label{eq:2d-bpt}
    kd^2 = O(n)~.
\end{align}
We henceforth refer to this result as the BPT bound.

While this bound proves strict geometric-locality constrains code parameters, what is the best we can do with some limited set of long-range two-qubit gates?
Conversely, what sorts of graphs does one have to realize in $2$ dimensions to implement a code with some target parameters?
These questions were partially addressed in \cite{baspin2022connectivity}.
The \emph{separation profile} $s_G$ of the graph $G$ was identified as an important metric.

To define this quantity, we begin with the graph separator.
The separator $\mathsf{sep}(G)$ is a set of vertices in $V$ such that we can write $V = A \sqcup \mathsf{sep}(G) \sqcup B$.
Furthermore, there are no edges between $A$ and $B$ and both partitions obey $|A|, |B| \leq 2n/3$.
The separator $\mathsf{sep}(G)$ itself is not unique; we define $s_G$ via a minimization over all separators:
\begin{align}
    s_G = \min_{\sep(G)}\left(|\sep(G)|\right)~.
\end{align}
Studying the separator helps us understand the connectivity of the connectivity graph---it formalizes the intuition that the more difficult it is to partition the graph into two pieces, the more connected it is.
To motivate the separation profile, we note that some portions of $G$ can be well connected while others can be poorly connected.
The separator itself is a bit crude as it only quantifies the connectivity of the entire graph $G$.
We quantify the connectivity of different patches of the graph $G$ using a separator \emph{profile} $s_G(r)$ defined as follows.
For $1 \leq r \leq n$, the separation profile corresponds to the separator of all subgraphs $\cH$ of $G$ of size at most $r$:
\begin{align}
    s_G(r) = \max_{\cH \subseteq G, \; |\cH| \leq r}\min_{\sep(\cH)}\left(|\sep(\cH)|\right)~.
\end{align}

In particular, it was shown that the graph separator constrains the distance $d$ and impacts the trade-off between $k$ and $d$.
Throughout this paper, we assume that the separator $s_G(r)$ is a polynomial, i.e.\ $s_G(r) \leq \beta r^c$ for some positive constant $\beta$ and constant $c \in (0,1]$.
Furthermore, we assume $\beta$ and $c$ are independent of $n$.
In light of this, we drop the subscript $n$ from the separation profile and simply write $s(r)$.

It was shown in \cite{baspin2022connectivity} that the results of Bravyi \& Terhal and by Bravyi, Poulin \& Terhal can both be generalized.
For a family $\{\cQ_n\}_n$ of $\dsl n,k,d \dsr$ quantum codes with connectivity graphs $G_n$ with max degree $\Delta$, these generalizations are both expressed in terms of the constant $c$:
\begin{align}
\label{eq:old-bound}
    d = O(\Delta n^{c})~, \quad kd^{2(1-c)} = O(n)~.
\end{align}
Thus, even without the geometry of Euclidean space, codes are constrained by some abstract notion of connectivity.
In \cite{baspin2022quantifying}, the implications of these bounds for constructing syndrome-extraction circuits in 2 dimensions was studied.
It was shown that constant-rate codes require a lot of long-range connectivity.

Two caveats are in order.
First, the bound on the distance requires that the connectivity graph $G_n$ has bounded degree.
Second, the trade-off between $k$ and $d$ does not match the BPT bound even in 2 dimensions.
Specifically, it is known that $s_G(r) = O(r^{1/2})$ for geometrically-local graphs in $\bbR^2$.
The $k$-$d$ trade-off in Eq.\ \eqref{eq:old-bound} implies that
\begin{align}
\label{eq:old-bound-2d}
    kd = O(n)~.
\end{align}
The mismatch between Eq.\ \eqref{eq:2d-bpt} and Eq.\ \eqref{eq:old-bound-2d} could mean one of two things---either Eq.\ \eqref{eq:old-bound-2d} is not tight or there exist codes with separators $s_G(r) = O(r^{1/2})$ that can achieve better parameters.
In this work, we address these problems.

\noindent First, we obtain a tighter trade-off between $k$ and $d$ that only depends on the size of the separator.
\begin{restatable}[Informal]{theorem}{kd-tradeoff}
\label{thm:main}
Let $\cQ$ be a $\dsl n,k,d \dsr$ quantum code whose connectivity graph $G$ has separation profile $s_G(r) \le O(r^c)$ for some $c \in (0,1]$. 
Then,
    \begin{align*}
        kd^{\frac{1-c^2}{c}} = O(n)~.
    \end{align*}
\end{restatable}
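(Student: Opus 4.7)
The approach is to combine a recursive separator decomposition of the connectivity graph $G$ with a multi-stage cleaning argument for stabilizer codes, refining the approach of \cite{baspin2022connectivity}. The fundamental tool is the Bravyi--Terhal cleaning lemma: if $M \subseteq [n]$ is a correctable set (no nontrivial logical operator has support contained in $M$), then every logical operator has a stabilizer-equivalent representative supported in $[n] \setminus M$; in particular, every set of size less than $d$ is automatically correctable.

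\textbf{Decomposition and first cleaning.} Using the separation profile $s_G(r) \le \beta r^c$, recursively apply the separator theorem to $G$: split off a separator of size at most $\beta n^c$, recurse on both balanced pieces, and stop at subregions of size $r^\ast$, to be fixed later. A standard geometric sum over levels shows that the union $T$ of all separators produced by this recursion has size $\Theta(n / (r^\ast)^{1-c})$. Choosing $r^\ast < d$ and applying the cleaning lemma to each leaf in turn, every logical operator can be pushed to be supported inside $T$. This yields the baseline estimate $k \le |T|$, which by itself only gives the weak bound $k d^{1-c} = O(n)$.

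\textbf{Second cleaning and optimization.} The improvement comes from applying the cleaning argument a second time \emph{within} $T$ itself, viewed as a subgraph of $G$. Because $T$ inherits a separation profile from $G$, a further hierarchical decomposition of $T$ can be carried out and the cleaning lemma reapplied, shrinking the residual support. Balancing the scale parameters at the outer and inner stages---so that the correctability budget is saturated at both levels---should yield a residual set of size $O(n / d^{(1-c^2)/c})$, which gives the theorem.

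\textbf{Main obstacle.} The chief difficulty is the second cleaning step. First, one must verify that cleaning inside $T$ is compatible with the first cleaning pass, so that the second round does not re-introduce logical support outside $T$; this requires carefully tracking the stabilizer equivalences implementing the cleanings. Second, the target exponent $(1-c^2)/c$ only emerges if the recursive structure on $T$ effectively behaves like a separator profile of exponent $c^2$ (or an analogous gain); otherwise a naive iteration merely reproduces the Baspin--Krishna exponent $2(1-c)$. Proving the right structural lemma about the ``separator of a separator'', together with the careful scale optimization, is where most of the real work of the theorem lives.
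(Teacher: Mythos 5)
Your plan captures the high-level shape of the argument---recursive separator decomposition plus a two-stage cleaning---but it misses the single new idea that makes the improved exponent possible, and the mechanism you speculate about (a ``separator of a separator'' with effective exponent $c^2$) is not how the proof works.

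The paper's key new lemma (Lemma~\ref{lem:correct-1}) says that a vertex set $W$ of size up to roughly $(\varepsilon d)^{1/c}$ is $d$-correctable, provided its \emph{outer boundary} in $G$ is at most $d/4$. The proof is: recursively separate $W$ down to pieces of size $<d$; at each internal node, the boundary of the current piece is contained in $\partial_+ W$ plus the union of the ancestor separators, and a geometric sum shows this boundary has size $<d/3$; so the Union and Expansion Lemmas let you re-glue everything up to $W$. This is the generalization of BPT's ``a $d\times d$ square is correctable because its boundary has size $\sim d$.'' With an $r$-division (Lemma~\ref{lem:rdivision}) into pieces of size $\sim d^{1/c}$ each having boundary $O(d)$, one gets a single correctable set $A$ with $|V\setminus A|=O(n/d^{(1-c)/c})$, which is strictly more than cleaning only pieces of size $r^*<d$. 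In your plan you cap $r^*<d$, so after one pass you can only achieve $|T|=O(n/d^{1-c})$, and a second pass of the same kind can only bring you to $O(n/d^{2(1-c)})$---the Baspin--Krishna bound, not the new one. There is no hidden $c^2$ separator lemma.

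Two further issues. First, your intermediate claim ``$k\le|T|$ after one cleaning'' is not the BPT mechanism; the paper invokes Lemma~\ref{lem:bptabc}, which requires \emph{two} simultaneously correctable sets $A,B$ and concludes $k\le|V\setminus(A\cup B)|$, so one cleaning pass does not bound $k$ in the way you suggest. Second, and more importantly, your proposed ``second cleaning inside $T$ viewed as a subgraph'' runs directly into the obstruction the paper flags explicitly: correctability deduced from the Expansion Lemma is relative to boundaries in the \emph{original} graph $G$, and does not transfer when you pass to an induced subgraph. The paper sidesteps this by using the \emph{strong} lemma (Expansion-based, $r\sim d^{1/c}$) only in the first pass on all of $G$, and then using the \emph{weak} lemma ($r<d$, disconnected components via the Union Lemma only) on $G[V\setminus A]$ for the second pass. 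The weak lemma's correctability is purely combinatorial (disconnected pieces of size $<d$) and hence is preserved when passing from $G[V\setminus A]$ back to $G$. The exponent then comes out as $(1-c)/c + (1-c) = (1-c^2)/c$. Your outline inverts this structure and therefore would not close.
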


\noindent Second, we obtain a degree-oblivious bound on the distance in terms of the separation profile.
A similar result was obtained in \cite{baspin2022connectivity}.
That result only held for quantum LDPC codes, because it assumed that the connectivity graph had constant degree, but it used a milder assumption that the code had $s_G(r)\le O(r^c)$ only for $r=n$, rather than all $r$.
\begin{theorem}[Informal]
Let $\cQ$ be a $\dsl n,k,d \dsr$ quantum code whose connectivity graph $G$ has separation profile $s_G(r) \le O(r^c)$ for some $c \in (0,1]$. 
Then,
    \begin{align}
        d = O(n^c).
    \label{thm:main-2}
    \end{align}
\end{theorem}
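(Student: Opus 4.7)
The plan is a recursive Bravyi--Terhal cleaning argument using separators of induced subgraphs at each level. Suppose for contradiction that $d > C n^{c}$ for a sufficiently large constant $C = C(\beta, c)$; I will produce a non-trivial logical operator of weight strictly less than $d$, yielding a contradiction.

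I maintain a pair $(T, X)$, where $T \subseteq V$ is a non-correctable candidate (initially $T = V$, $X = \emptyset$) and $X$ is a cumulative boundary satisfying the invariant that $X$ separates $T$ from $V \setminus (T \cup X)$ in $G$. At each step I invoke the separator theorem on the induced subgraph on $T$ to produce $S \subseteq T$ with $|S| \le \beta |T|^{c}$ and a partition $T = A \sqcup S \sqcup B$ with $|A|, |B| \le 2|T|/3$. Then $X \cup S$ pairwise separates $A$, $B$, and $V \setminus (T \cup X)$ in $G$: the first pair by $S$ inside $T$, the others by the inductive invariant on $X$. Provided $|X \cup S| < d$ the set is correctable, so cleaning lets me rewrite a non-trivial logical supported on $T$, modulo stabilizers, as $L' = L'_{A} \cdot L'_{B} \cdot L'_{\mathrm{out}}$, each factor supported on one piece. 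By the standard Bravyi--Terhal argument — a generator's support is a clique in $G$ and hence cannot straddle the separator — each factor individually commutes with every generator, so at least one is a non-trivial logical. In the favorable subcase one of $A$, $B$ is non-correctable and I recurse with $T$ replaced by that child and $X \leftarrow X \cup S$, shrinking $|T|$ by $2/3$. The cumulative boundary satisfies $|X| \le \sum_{i \ge 0} \beta (2/3)^{ic} n^{c} = O(n^{c})$, which stays below $d$ under the contradiction hypothesis (for $C$ a sufficiently large multiple of $\beta/(1-(2/3)^{c})$), so cleaning is always justified. After $O(\log n)$ rounds $|T| \le n^{c}$, and $T$ directly supports a non-trivial logical of weight $\le |T| < d$, the desired contradiction.

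The main obstacle is the third ``escape'' subcase, in which only $L'_{\mathrm{out}}$ is non-trivial, so the sole certified non-correctable set is $V \setminus (T \cup X)$ and one cannot immediately shrink $T$. My plan is to run the argument on top of a precomputed hierarchical separator decomposition of $G$ and to interpret each escape as a jump to a sibling subtree rather than a descendant, using an amortized bookkeeping argument that bounds the total separator mass consumed across all descents and jumps by $O(n^{c})$ and that forbids revisiting any tree node. Correctly re-initializing the boundary invariant on $X$ after each jump — so that the new $T$ still has a valid cumulative boundary separating it from the already-explored portion — and ruling out infinite loops of escapes, is the technical heart of the proof.
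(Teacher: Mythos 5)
Your high-level plan --- chase a non-trivial logical down a separator hierarchy while keeping the accumulated separator mass below $d$ --- is genuinely different from what the paper does, and as written it has a real gap, which you have yourself identified but not resolved: the ``escape'' case. The problem is not merely bookkeeping. When you clean a logical $L$ from a correctable set $X\cup S$, the cleaned operator $L'$ differs from $L$ by a product of stabilizers, and a stabilizer can have support anywhere in $V$. So even if $L$ were supported inside $T$, the cleaned $L'$ can acquire support on $V\setminus(T\cup X)$, and the certified non-trivial factor can land there. Worse, after you ``jump'' $T$ to a sibling and clean again, the same mechanism can carry support right back into the subtree you just left; there is no evident monotone potential that forbids revisiting tree nodes, and the proposed ``amortized bookkeeping'' / ``jump to sibling subtree'' step is exactly the missing argument, not a routine detail. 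The tempting one-shot fix --- clean $L$ from \emph{all} separators in the decomposition at once so that the residue factors over the leaves --- fails because the total separator mass is $\Theta(n/d^{1-c})$ (this is precisely the quantity bounded in Lemma~\ref{lem:bk22-2}), which in general far exceeds $d$ and is therefore not a correctable set you are allowed to clean from.

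The paper's proof avoids this obstacle entirely by running \emph{bottom-up} via the Expansion Lemma rather than top-down via cleaning (Lemma~\ref{lem:correct-1}). Build the same separator tree rooted at $W=V$, but instead of trying to decide which child ``owns'' the logical, prove by induction from the leaves upward that \emph{every node $U$ of the tree is a correctable set}. Leaves have size $<d$ and are correctable. For an internal $U$ with children $U_0,U_1$: $U_0\cup U_1$ is correctable by the Union Lemma (they are disjoint and disconnected), and $\partial_+(U_0\cup U_1)$ is contained in $\partial_+W$ together with the separators $S(U')$ of all ancestors $U'$ of $U$, whose sizes form a geometric series summing to $<d$; hence the boundary is itself correctable, and the Expansion Lemma shows $U\subseteq(U_0\cup U_1)\cup\partial_+(U_0\cup U_1)$ is correctable. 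Applying this to $W=V$ (so $\partial_+W=\emptyset$) shows that if $n\le(\varepsilon' d)^{1/c}$ then $V$ itself is correctable, forcing $k=0$; taking the contrapositive gives $d=O(n^c)$. This route never needs to track where a logical operator goes, which is exactly the decision your argument cannot make in the escape case, and it also only ever needs a \emph{local} boundary (size $O(n^c)$) to be correctable rather than the much larger global separator mass.
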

\noindent This represents a strengthening of the Bravyi-Terhal bound as it is now independent of the degree of the connectivity graph $G$.

Both of these theorems follow from our main result, Lemma~\ref{lem:correct-1}.
Informally, this lemma states that under certain conditions, there are sets much larger than $d$ that can be erased without losing encoded information.
This was the intuition behind the BPT bound and Lemma~\ref{lem:correct-1} can be seen as a generalization of this idea dealing with codes that with connectivity graphs unconstrained by geometric locality.

However, there remains a mismatch between the BPT bound in Eq.\ \eqref{eq:2d-bpt} and Eq.\ \eqref{eq:new-bound-2d}.
It is still not clear if Theorem~\ref{thm:main} tight---for $2$-dimensional local codes, we find
\begin{align}
\label{eq:new-bound-2d}
    kd^{\frac{3}{2}} = O(n)~,
\end{align}
which can be obtained by substituting $c=1/2$ in Theorem~\ref{thm:main}.
To address this problem, we conclude this paper with a purely graph-theoretic conjecture.
If this conjecture holds, then it would imply that Theorem~\ref{thm:main} can be strengthened so as to strictly generalize the BPT bound from geometrically-local codes to a broader class of codes characterized by non-expanding connectivity graphs.
We state the conjectured generalization in Section~\ref{sec:conclusions}.

\textbf{Related work:} Flammia \emph{et al}.\ \cite{flammia2017limits} generalized the BPT bound to approximate quantum error correcting codes.
Even when allowing for imperfect recovery of encoded information, there is a trade-off between $k$ and $d$ imposed by geometric locality.
Delfosse extended the BPT bound to local codes in 2-dimensional hyperbolic space \cite{delfosse2013tradeoffs}.
Kalachev and Sadov recently studied and generalized the Cleaning Lemma, the central tool used in the Bravyi-Terhal and BPT bounds \cite{kalachev2022linear}.
In the context of circuits, Delfosse, Beverland \& Tremblay \cite{delfosse2021bounds} studied syndrome-extraction circuits subject to locality constraints.
They demonstrated that, for certain classes of quantum LDPC codes, there is a trade-off between the circuit depth and width (the total number of qubits used).
Most recently, Baspin, Fawzi \& Shayeghi \cite{baspin2023lower} demonstrated that there is a trade-off between the rate of sub-threshold error suppression, the circuit depth and the width.

\section{Background and notation}

\subsection{Stabilizer codes}
\label{ssec:codes}

In this paper we focus on stabilizer codes for simplicity.
Stabilizer codes are the equivalent of linear codes in the classical setting.
The ideas presented can be generalized to commuting projector codes (a superset of stabilizer codes) as in \cite{baspin2022connectivity}.

The state of pure states of a qubit is associated with $\bbC^{2}$ and the state of $n$ qubit pure states are associated with $(\bbC^{2})^{\otimes n}$.
Let $\cP = \{\ssI, \ssX, \ssY, \ssZ\}$ denote the Pauli group and $\cP_n = \cP^{\otimes n}$ denote the $n$-qubit Pauli group.
A stabilizer code $\cQ = \cQ(\cS)$ is specified by the stabilizer group $\cS$, an Abelian subgroup of the $n$-qubit Pauli group that does not contain $-\ssI$.
The code $\cQ$ is the set of states left invariant under the action of the stabilizer group, i.e.\ $ \cQ = \{ \ket{\psi} : \; \ssS \ket{\psi} = \ket{\psi} \;\forall \ssS \in \cS\}$.
Being an Abelian group, we can write $\cS$ in terms of $m = (n-k)$ independent generators $\{\ssS_1,...,\ssS_m\}$.

\begin{definition}[Connectivity graph]
	\label{def:connectivity}
	Let $\cQ = \cQ(\cS)$ be a code and let $\{\ssS_1,...,\ssS_m\}$ be a choice of generators for $\cS$.
	The connectivity graph $G = (V, E)$ associated with $\{\ssS_1,...,\ssS_m\}$ is defined as follows:
	\begin{enumerate}
		\item $V = [n]$, i.e. each vertex is associated with a qubit, and
		\item $(u,v) \in E$ if and only if there exists a generator $\ssS \in \{\ssS_1,...,\ssS_m\}$ such that $u,v \in \supp(\ssS)$.
	\end{enumerate}
  
  For a set $U \subseteq V$, we let $\overline{U}=V\setminus U$ denote the complement of $U$.
  We define the outer and inner boundaries of a subset $U \subseteq V$ as follows:
    \begin{enumerate}
        \item \textbf{Outer boundary:} $\bdry_{+} U$ is the set of all qubits $v \in \comp{U}$ such that for some $u \in U$, the edge $\{u,v\} \in E$.
        \item \textbf{Inner boundary:} $\bdry_{-} U$ is $\bdry_{+} \comp{U}$, the outer boundary of the complement of $U$.
        Equivalently, it is the set of all nodes $u \in U$ such that for some $v \not\in U$, the edge $\{u,v\} \in E$.
    \end{enumerate}
\end{definition}

In the language of graphs, we refer to subsets $U$ as a \emph{region} of the graph $G$.
The connectivity graph is not \emph{only} a function of the code $\cQ$, but also of the choice of generating set $\{\ssS_1,...,\ssS_m\}$.
Different generating sets can yield the same code but different graphs. 
Furthermore, in constructing the graph $G$, we discard information; if two qubits $u,v$ are in the support of a generator $\ssS$, we draw an edge between them regardless of whether the restriction of $\ssS$ to $u$ is $\ssX$, $\ssY$ or $\ssZ$.
Therefore, different codes might yield the same connectivity graph.
For the sake of brevity, we do not repeat that the graph is a function of the set of generators and assume that the set of generators is fixed and implicit.

In spite of this lack of uniqueness, the connectivity graph $G$ is a very useful tool.
For our purposes, it allows us to reason about \emph{correctability}, i.e.\ to what extent the code $\cQ$ is robust to erasure errors.
It can also be used for different purposes, see for example \cite{kovalev2013fault} or \cite{gottesman2014fault}.

\begin{definition}[Correctable set]
  For $U \subset V$, let $\cD[\overline{U}]$ and $\cD[V]$ denote the space of density operators associated with the sets of qubits $\overline{U}$ and $V$ respectively.
  The set $U$ is \emph{correctable} if there exists a recovery map $\cR: \cD[\comp{U}] \to \cD[V]$ such that for any code state $\rho_{\cQ} \in \cQ$, $\cR(\Tr_{U}(\rho_{\cQ})) = \rho_{\cQ}$.
\end{definition}

If $d$ is the distance of the code $\cQ$, then any set of size strictly less than $d$ is correctable.
However, some codes contain correctable regions that are much larger than $d-1$.
The connectivity graph makes it easy to represent and visualize this process of constructing large correctable regions.

\noindent In the following lemma, we present all the tools about correctable regions in connectivity graphs that we will use.
We refer to \cite{bravyi2009no,flammia2017limits} for proofs of these statements.

\begin{lemma}
  We restate some important results about correctable regions in an $\dsl n,k,d\dsr$ code.
\label{lem:correctable}
\begin{enumerate}
   \item \label{lem:trivial} \textbf{\emph{Trivial:}} Let $U \subset V$ be a correctable set.
   Any subset $W \subset U$ is correctable.
   \item \label{lem:dist} \textbf{\emph{Distance:}} If $U \subset V$ such that $|U| < d$, then $U$ is correctable.
   \item \label{lem:bptunion} \textbf{\emph{Union Lemma:}}
    Sets $U_1,\dots,U_\ell\subset V$ are \emph{decoupled} if the support of any generator $\ssS$ overlaps with at most one $U_i$. Equivalently, for all $i\neq j$, $U_i$ and $U_j$ are disjoint and disconnected in the connectivity graph, meaning there are no edges between $U_i$ and $U_j$.

    If $U_1,\dots,U_\ell$ are decoupled and each $U_i$ is correctable, then $\bigcup_{i=1}^\ell U_i$ is correctable.
    \item \label{lem:bptexpansion} \textbf{\emph{Expansion Lemma:}}
    Given correctable regions $U$, $T$ such that $ T \supset \bdry_+ U$, then $T \cup U$ is correctable. 
\end{enumerate}
\end{lemma}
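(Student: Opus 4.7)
The plan is to invoke the standard Knill--Laflamme characterization of erasure correctability for stabilizer codes: a region $W \subseteq V$ is correctable if and only if every logical operator $L$ (an element of the normalizer of $\cS$ lying outside $\cS$) admits a representative $L \cdot S$, $S \in \cS$, with $\supp(L \cdot S) \cap W = \emptyset$. With this in hand, item~(\ref{lem:trivial}) follows by composing the given recovery for $U$ with a partial trace over $U \setminus W$; item~(\ref{lem:dist}) follows since a nontrivial logical has weight at least $d$ and so cannot sit in a set of size below $d$.

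For the Union Lemma (item~(\ref{lem:bptunion})), the key ingredient is a \emph{local} form of cleaning: if $U_i$ is correctable, then for any logical $L$ one can find $S_i \in \cS$ with $\supp(S_i) \subseteq U_i \cup \bdry_+ U_i$ such that $(L \cdot S_i)|_{U_i} = \ssI$. This local strengthening holds because $\cS|_{U_i}$ is generated by restrictions of the stabilizer generators whose support meets $U_i$, each of which already lives inside $U_i \cup \bdry_+ U_i$. The decoupled hypothesis forces $\bdry_+ U_i \cap U_j = \emptyset$ for $j \neq i$, so cleaning off $U_i$ does not alter the support on any other $U_j$. Iterating over $i = 1, \ldots, \ell$ then produces a single stabilizer $S_1 \cdots S_\ell$ that clears the entire union $\bigcup_i U_i$.

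The Expansion Lemma (item~(\ref{lem:bptexpansion})) is the most delicate step and is where the main obstacle lies. Given a logical $L$, I would first apply correctability of $T$ to obtain $S_1 \in \cS$ with $\supp(L \cdot S_1) \cap T = \emptyset$; then $L \cdot S_1$ is supported on $\overline{T} = U \sqcup \overline{U \cup T}$ (assuming WLOG that $U \cap T = \emptyset$, by replacing $U$ with $U \setminus T$), and so factors as a tensor product $L \cdot S_1 = E_U \cdot E_{\overline{U \cup T}}$ over these disjoint regions. The crucial consequence of $T \supseteq \bdry_+ U$ is that every generator of $\cS$ has support lying entirely on one side of the partition $(U \mid \overline{U \cup T})$: those touching $U$ sit inside $U \cup \bdry_+ U \subseteq U \cup T$, while those not touching $U$ are disjoint from $U$. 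Consequently, any generator's commutation with $L \cdot S_1$ decouples into commutation with $E_U$ and with $E_{\overline{U \cup T}}$ separately, which forces $E_U$ (and symmetrically $E_{\overline{U \cup T}}$) to lie individually in the normalizer of $\cS$. Since $E_U$ is then a logical supported in the correctable set $U$, the Knill--Laflamme characterization gives $E_U \in \cS$; then $L \cdot (S_1 \cdot E_U) = E_{\overline{U \cup T}}$ is the desired representative of $L$ with support missing $U \cup T$. The reason a naive sequential cleaning (clean off $T$, then off $U$) fails is that the second cleaning reintroduces support in the first region; the tensor decomposition combined with the geometric hypothesis $T \supseteq \bdry_+ U$ is precisely what allows us to argue $E_U$ alone lies in the normalizer---without the containment, a generator bridging $U$ and $\overline{U \cup T}$ would couple the two tensor factors and destroy the decoupling.
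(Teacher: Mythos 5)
Your proof is correct, and it follows essentially the standard Cleaning-Lemma route used in the references the paper cites (\cite{bravyi2009no,flammia2017limits}); the paper itself does not reprove this lemma. The one non-standard refinement is worth highlighting: your \emph{local} cleaning step for the Union Lemma---that a cleaning stabilizer $S$ for $L$ on $U_i$ may be replaced by the sub-product $S_{\mathrm{loc}}$ of generators whose supports meet $U_i$, which lives in $U_i \cup \bdry_+ U_i$ and has the same restriction to $U_i$ since the omitted generators act trivially there---is exactly the observation needed to make the iteration over $i$ non-interfering under the decoupled hypothesis, and your verification of it is sound. Likewise the Expansion Lemma argument is correct: the WLOG reduction $U \mapsto U \setminus T$ preserves both correctability (Trivial) and the containment $\bdry_+(U\setminus T) \subseteq T$, the hypothesis $T \supseteq \bdry_+ U$ is precisely what forbids a generator from bridging $U\setminus T$ and $\overline{U\cup T}$, and this is what lets the commutation constraint on the tensor factor $E_U$ decouple, forcing $E_U \in N(\cS)$ and hence $E_U \in \cS$ by correctability of $U$.
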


\noindent As originally noted in \cite{bravyi2009no,bravyi2010tradeoffs}, the size and number of simultaneously correctable regions in the connectivity graph $G$ reveals a lot of information about the code $\cQ$.
We state the following lemma from \cite{bravyi2010tradeoffs}.
\begin{lemma}[Bravyi-Poulin-Terhal \cite{bravyi2010tradeoffs}, Eq.~14]
	\label{lem:bptabc}
	Consider an $\dsl n,k,d\dsr$ stabilizer code $\cQ$ defined on a set of qubits $Q$, $|Q| = n$, such that $Q = A \sqcup B \sqcup C$.
	If $A,B$ are correctable, then
	\begin{equation*}
		k  \leq |C|~.
	\end{equation*}
\end{lemma}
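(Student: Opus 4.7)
The plan is a short information-theoretic argument based on the decoupling criterion for correctability together with the subadditivity of von Neumann entropy. The key input is that a region $U$ is correctable if and only if, when the encoded maximally mixed logical state is purified by a reference system $R$ on $k$ qubits, the reduced state on $R \cup U$ factorizes as $\rho_{RU} = \rho_R \otimes \rho_U$; equivalently, $S(RU) = S(R) + S(U)$.

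First I would introduce the pure state $\ket{\Psi}_{RABC}$ obtained by purifying the encoding of the maximally mixed logical state, so that $S(R) = k$. Applying the decoupling characterization to the correctable region $A$ and using that $\ket{\Psi}$ is pure on $RABC$ (whence $S(RA) = S(BC)$), I would deduce
\begin{align*}
S(BC) \,=\, S(RA) \,=\, S(R) + S(A) \,=\, k + S(A).
\end{align*}
Symmetrically, correctability of $B$ yields $S(AC) = k + S(B)$.

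Applying subadditivity to the left-hand sides then gives
\begin{align*}
k + S(A) \le S(B) + S(C), \qquad k + S(B) \le S(A) + S(C).
\end{align*}
Adding these, the $S(A) + S(B)$ contributions cancel and I obtain $2k \le 2 S(C)$, i.e.\ $S(C) \ge k$. The lemma then follows from the trivial dimension bound $S(C) \le |C|$, since the Hilbert space on $|C|$ qubits has dimension $2^{|C|}$.

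The only non-routine ingredient is the equivalence between correctability (in the Knill--Laflamme sense used in Lemma~\ref{lem:correctable}) and decoupling from the reference system; this is the main point that a careful write-up must justify, and it is precisely the tool used in the original argument of Bravyi, Poulin, and Terhal. Everything after that equivalence is a short manipulation of standard entropy inequalities.
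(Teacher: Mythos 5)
The paper states Lemma~\ref{lem:bptabc} by citation to Bravyi, Poulin, and Terhal without reproducing a proof, so there is no in-paper argument to compare against. Your proof is correct and is the same information-theoretic argument used in that reference (and its generalization by Flammia et al.\ \cite{flammia2017limits}): the decoupling characterization of exact correctability, purity of $\ket{\Psi}_{RABC}$, subadditivity of entropy applied to $A$ and $B$ separately, and the dimension bound $S(C)\le |C|$.
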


\subsection{Correctable sets in graphs}

With the tools above in Section~\ref{ssec:codes}, we now can reason about the connectivity graph directly.
In particular, all of the properties in Lemma~\ref{lem:correctable} can be understood entirely in terms of the connectivity graph.
As such, we make the following definition.
\begin{definition}[Correctable family]
  For a graph $G$, a \emph{$d$-correctable family} $\mathcal{F}$ is a family of subsets of vertices with the following properties.
  \label{def:correctable}
  \begin{enumerate}
    \item \textbf{{Trivial}:} If $U\in \mathcal{F}$, then any subset of $U$ is in $\mathcal{F}$.
    \item \textbf{{Distance}:} Every set $U$ with size $|U| < d$ is in $\mathcal{F}$.
    \item \textbf{{Union Lemma}:} If $U_1,\dots,U_\ell$ are disjoint and disconnected sets in $\mathcal{F}$, then $\cup_{i=1}^\ell U_i$ is in $\mathcal{F}$. 
    \item \textbf{{Expansion Lemma}:} If $U,T\in \mathcal{F}$ and $T\supset \bdry_+ U$, then $U\cup T\in \mathcal{F}$.
  \end{enumerate}
\end{definition}
\noindent With this definition, we can now define what it means for a set of vertices (qubits) to be correctable with respect to its graph, rather than with respect to the quantum code.
\begin{definition}[Correctable sets in graphs]
  A set $U$ is \emph{$d$-correctable with respect to graph $G$} if $U$ is in every $d$-correctable family of $G$.
  By abuse of notation, we may simply say $U$ is \emph{correctable} if $d$ and $G$ are understood.
\end{definition}
We observe that the family of $d$-correctable sets $U$ with respect to $G$ itself forms a $d$-correctable family, and it in fact forms the (unique) inclusion-minimum $d$-correctable family.
Equivalently, the family of $d$-correctable sets is the intersection of all $d$-correctable families.
Intuitively, a set $U$ is correctable in a graph if and only if it can be deduced to be correctable using only properties 1, 2, 3, and 4 above.

It follows immediately from Lemma~\ref{lem:correctable} that the family of correctable sets for a quantum code indeed form a $d$-correctable family with respect to the connectivity graph.
\begin{corollary}
  Let $\mathcal{Q}$ be $\dsl n,k,d\dsr$ quantum code with a connectivity graph $G=(V,E)$.
  Then the family of sets that are correctable (in the quantum code sense) form a $d$-correctable family for $G$.
  In particular, any subset $U\subset V$ that is correctable (with respect to the quantum code $\mathcal{Q}$) is $d$-correctable with respect to the graph $G$.
  \label{cor:correctable}
\end{corollary}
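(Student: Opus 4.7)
The plan is to verify, one property at a time, that the family $\mathcal{F}_\mathcal{Q}$ of subsets of $V$ correctable in the quantum code sense satisfies each of the four axioms of a $d$-correctable family in Definition~\ref{def:correctable}. Each axiom is a direct restatement of the corresponding clause of Lemma~\ref{lem:correctable}, so the proof reduces to a four-line bookkeeping check followed by a remark on how the ``in particular'' clause is extracted.

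Specifically, the Trivial and Distance axioms of Definition~\ref{def:correctable} are literally parts \ref{lem:trivial} and \ref{lem:dist} of Lemma~\ref{lem:correctable}, so nothing needs to be said beyond a citation. For the Union axiom, the one point that merits comment is that the graph-theoretic hypothesis ``disjoint and disconnected in the connectivity graph'' in Definition~\ref{def:correctable} matches the code-theoretic hypothesis ``decoupled'' (no generator's support intersects more than one $U_i$) used in Lemma~\ref{lem:correctable}.\ref{lem:bptunion}. This equivalence is immediate from Definition~\ref{def:connectivity}: any two qubits lying in a single generator's support are joined by an edge in $G$, so ``decoupled'' and ``disjoint $+$ disconnected in $G$'' coincide. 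With this translation, Lemma~\ref{lem:correctable}.\ref{lem:bptunion} gives the Union axiom for $\mathcal{F}_\mathcal{Q}$. The Expansion axiom is exactly Lemma~\ref{lem:correctable}.\ref{lem:bptexpansion}, with the outer boundary $\bdry_{+} U$ a purely graph-theoretic object defined identically in both places.

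Once $\mathcal{F}_\mathcal{Q}$ is certified as a $d$-correctable family, the ``in particular'' conclusion drops out: any $U$ correctable with respect to $\mathcal{Q}$ is, by definition, a member of $\mathcal{F}_\mathcal{Q}$, which is a $d$-correctable family of $G$, so $U$ is witnessed as correctable purely from Properties~1--4 of Definition~\ref{def:correctable} --- the intended content of being $d$-correctable with respect to $G$.

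The main obstacle is cosmetic rather than mathematical: aligning the code-theoretic vocabulary of Lemma~\ref{lem:correctable} (supports of stabilizer generators, recovery maps) with the purely graph-theoretic vocabulary of Definition~\ref{def:correctable} (edges, boundaries, disconnectedness). The only non-trivial identification is that of ``decoupled'' with ``disjoint and disconnected in $G$'', and this is handled in a single sentence by appealing to Definition~\ref{def:connectivity}. No new estimate or construction is required, and I would present the verification as a short enumerated list to mirror the structure of Definition~\ref{def:correctable}.
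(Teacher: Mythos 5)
Your verification of the four axioms is correct and is exactly the content of the paper's one-line justification (``It follows immediately from Lemma~\ref{lem:correctable}\dots''). The one non-trivial identification---that ``decoupled'' in Lemma~\ref{lem:correctable}.\ref{lem:bptunion} coincides with ``disjoint and disconnected in $G$'' in Definition~\ref{def:correctable}---you handle properly via Definition~\ref{def:connectivity}.

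However, your justification of the ``in particular'' clause contains a genuine gap, and it is worth flagging that the clause as printed in the paper appears to state the wrong direction. You argue: $U$ correctable w.r.t.\ $\mathcal{Q}$ implies $U \in \mathcal{F}_{\mathcal{Q}}$, and since $\mathcal{F}_{\mathcal{Q}}$ is a $d$-correctable family, ``$U$ is witnessed as correctable purely from Properties~1--4.'' This inference is invalid. Being a member of \emph{one} $d$-correctable family is strictly weaker than being in \emph{every} $d$-correctable family, which is the definition of $d$-correctable w.r.t.\ $G$ (equivalently, membership in the inclusion-minimal family $\mathcal{F}_{\min}$). A quantum code may well have a large correctable set $U$ whose correctability is not derivable from the four graph axioms alone; such a $U$ lies in $\mathcal{F}_{\mathcal{Q}}$ but not in $\mathcal{F}_{\min}$. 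What the first sentence of the corollary actually gives---and what the paper uses later via Lemma~\ref{lem:bptabc}---is the reverse implication: if $U$ is $d$-correctable w.r.t.\ $G$, then $U$ lies in every $d$-correctable family, in particular in $\mathcal{F}_{\mathcal{Q}}$, so $U$ is correctable w.r.t.\ $\mathcal{Q}$. You should state and prove that direction rather than reproduce the (backwards) phrasing.
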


Corollary~\ref{cor:correctable}, together with Lemma~\ref{lem:bptabc}, gives us a framework for proving quantum code limitations using purely graph-theoretic arguments: show that for certain families of graphs, any $d$-correctable family contains two sets $A$ and $B$ such that $V\setminus (A\cup B)\le f(n,d)$, where $f(n,d)$ is our desired upper bound on the dimension of the code.

\subsection{Separators}
\label{subsec:sep}
In this section, we introduce the graph separator, a way to quantify the connectivity of the connectivity graph.

\begin{definition}
    Let $G = (V, E)$ be a graph.
	The separator of $G$, written $\sansserif{sep}(G)$ is the smallest set $T \subset V$ such there exists a disjoint partition of the vertices $V=U_1\sqcup T\sqcup U_2$ such that 
	\begin{enumerate}
    \item Both of $|U_1|, |U_2| \leq \frac{2}{3}|V|$, and
		\item There are no edges between $U_1$ and $U_2$.
	\end{enumerate}
\end{definition}

The separator is not uniquely defined as multiple sets could have the same size and still split the graph in two disjoint subgraphs.
However, this multiplicity does not affect our results, as it suffices to use the existence of \emph{one} such small set.
In this work, we prove limitations of quantum codes whose connectivity graphs are constrained by particular \emph{separation profiles}.

\begin{definition}
	\label{def:alphasep}[Separation Profile]
  For any graph $G$ on $n$ vertices, we define its separation profile $s_G : [1,...,n] \rightarrow [1,...,n]$, 
\[
    s_G(r) = \max\{ \sansserif{sep}(H) : \text{$H$ subgraph of $G$ with at most $r$ vertices}\}~.
	\]
\end{definition}
As an example, note that for a 2-dimensional grid graph $G$ of side length $L$, $s_G(r) = O(\sqrt{r})$ for all $1 \leq r \leq L^2$.
For a $D$-dimensional grid graph of side length $L$, $s_G(r) = O(r^c)$ for all $1 \leq r \leq L^D$ where $c = 1-1/D$.
Like these examples, we consider when the separation profile $G$ is polynomial, i.e.\ $s_G(r) = O(r^{c})$ for all $1 \leq r \leq n$.
This is a natural setting, as a variety of natural graph classes have polynomial separation profiles \cite{lipton1979separator, alon1990separator, gilbert1984separator, miller1991unified}.

\noindent The following is a useful property of graphs with polynomial separation profiles.
\begin{lemma}[Lemma 2.1 of \cite{henzinger1997faster}]
	\label{lem:rdivision}
  Let $\beta\ge 1$ and $c\in(0,1)$ be constants.
  Let $G$ have separation profile $s_G(r) \le \beta r^c$ for all $r$.
  There exists a constant $\alpha=\alpha(\beta,c)$ depending on $c$ such that for any $r$, there is a partition of the vertices of $G$ into sets $W_1,\dots,W_\ell$ such that
  \begin{itemize}
    \item Each $W_i$ has size at most $r$,
    \item $|\bdry_{-} W_i| \leq \alpha\cdot s_G(r)$
    \item $\ell\le \alpha \cdot n/r$.
  \end{itemize}
\end{lemma}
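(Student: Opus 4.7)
The plan is to construct the partition in two phases: a recursive subdivision followed by a boundary cleanup. In Phase~1, start with the trivial partition $\{V\}$ and repeatedly process any piece $W$ of size $> r$: apply the separator theorem to the induced subgraph $G[W]$ to find $T \subset W$ with $|T| \le s_G(|W|) \le \beta |W|^c$ splitting $W = U_1 \sqcup T \sqcup U_2$, $|U_1|, |U_2| \le \tfrac{2}{3}|W|$; replace $W$ with $U_1 \cup T$ and $U_2$. Iterate until every piece has size $\le r$. Since $c<1$, for $|W|$ sufficiently large in terms of $\beta, c$, we have $|T| \le |W|/6$, so both resulting pieces have size between $|W|/6$ and $5|W|/6$; inductively, each terminal piece has size $\Omega(r)$, giving an initial partition of size $O(n/r)$.

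To control the boundary, observe that the inner boundary $\bdry_{-} W_i$ of each terminal piece in $G$ is contained in the union of the separators used along the root-to-leaf path in the recursion tree. At level $j$ of that tree there are at most $2^j$ subgraphs whose sizes sum to at most $n$, so by concavity of $x \mapsto x^c$ the total separator size at level $j$ is at most $\beta \cdot 2^j (n / 2^j)^c = \beta n^c 2^{j(1-c)}$. Summing over the $O(\log(n/r))$ levels, this geometric series is dominated by its last term, yielding the total-boundary bound $\sum_i |\bdry_{-} W_i| = O((n/r)\cdot r^c)$. The \emph{average} per-piece boundary is therefore $O(r^c) = O(s_G(r))$, but individual pieces may still have excess boundary.

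In Phase~2, we clean up such pieces. For any piece $W$ with $|\bdry_{-} W| > \alpha s_G(r)$, apply the separator theorem to $G[W]$, using a weighted variant that balances the boundary vertices (weight $1$ on $\bdry_{-} W$ and $0$ elsewhere). This yields a separator of size $\le s_G(|W|) \le \beta r^c$ splitting $W$ into pieces each containing at most $\tfrac{2}{3}|\bdry_{-} W|$ of the original boundary plus at most $\beta r^c$ newly-boundary vertices. Iterating, each initial piece $W$ gives rise to a binary tree of sub-pieces whose leaves have boundary $\le \alpha s_G(r)$; standard accounting shows the tree has $O(|\bdry_{-} W|/r^c)$ leaves. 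Summing over all initial pieces and using the Phase~1 bound, the final partition still has only $O(n/r)$ pieces. \textbf{The main obstacle} is Phase~2, which invokes a \emph{weighted} separator theorem (balancing vertex weights rather than vertex counts); this is a mild strengthening of the separation profile assumption that holds for all natural graph classes with polynomial separators, and the precise argument with constants is the content of the cited Lemma~2.1 of Henzinger--Klein--Rao--Subramanian.
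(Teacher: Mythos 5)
The paper does not prove this lemma; it imports it as a black box from \cite{henzinger1997faster}. Your two-phase plan (recursive separator subdivision, then a boundary-balancing cleanup pass) is indeed the standard Frederickson-style route to an $r$-division, and your accounting of the total separator mass across levels is the right calculation. However, there is a genuine flaw, and it is instructive because the statement you were asked to prove is itself slightly off as written.

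The problem is in Phase 1, where you claim that ``the inner boundary $\bdry_{-}W_i$ of each terminal piece is contained in the union of the separators used along the root-to-leaf path.'' With the disjoint recursion you describe (replace $W$ by $U_1\cup T$ and $U_2$), this is false. Consider a leaf $P$ descending from the $U_2$ side of a split $W=U_1\sqcup T\sqcup U_2$. A vertex $v\in P$ can be adjacent to a vertex $u\in T$; then $v\in\bdry_{-}P$, but $v$ lies in $U_2$, not in any separator. The separators control the \emph{outer} boundary $\bdry_{+}P$, not the inner boundary; the inner boundary also contains all vertices of $P$ \emph{adjacent} to separators, whose count is governed by degrees, not by separator sizes. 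In fact the lemma as stated is false without a degree hypothesis: take $G$ to be a star on $n$ vertices. Then $s_G(r)\le 1$ for all $r$, but for any partition of $G$ into $\le\alpha n/r$ parts of size $\le r$, every leaf outside the center's part lies in the inner boundary of its own part, forcing $\sum_i|\bdry_{-}W_i|\ge n-r$, which cannot all be absorbed by $O(n/r)$ parts each with boundary $O(1)$. The actual Frederickson/Henzinger $r$-division uses \emph{overlapping} regions, where the separator is retained in both children; there, every boundary vertex of a region really is a separator vertex, and your Phase 1 containment goes through verbatim. The way the paper subsequently uses the lemma --- it immediately peels off $\bdry_{-}W_i'$ and only needs the interiors $W_i=W_i'\setminus\bdry_{-}W_i'$ to be disjoint with small $\bdry_{+}W_i$ --- is exactly what the overlapping formulation delivers, so this is a statement-level imprecision in the paper rather than a hole in its downstream argument.

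Two smaller points. First, your Phase 2 invokes a weighted separator theorem (balancing boundary-vertex weight) that is not literally part of the hypothesis $s_G(r)\le\beta r^c$. It does follow from the unweighted profile with a constant-factor loss --- iterate the unweighted separator on the heavier side; the sizes decay geometrically so the accumulated separator is $O(\beta r^c/(1-(2/3)^c))$ --- but you need to say this rather than wave at ``natural graph classes.'' Second, your claim that after Phase 1 ``each terminal piece has size $\Omega(r)$'' only holds once $r$ exceeds a threshold $r_0(\beta,c)$ making $\beta r^c\le r/6$; below that threshold the lemma is vacuous once $\alpha\ge r_0$, so this is fixable but should be stated.
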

In words, this lemma states that the graph can be simultaneously partitioned into many sets $W_1,...,W_{\ell}$.
These sets can be large (of size $r$) and yet have small boundaries ($|\bdry_{-} W_i| \leq \alpha \cdot s_G(r)$).

\section{Proofs}

The main goal of this section is to prove Theorem~\ref{thm:main}.

\subsection{Warmup}

As a warmup, we show how to use our framework for quantum code limitations (Lemma~\ref{lem:bptabc} and Corollary~\ref{cor:correctable}) by proving a weaker version of Theorem~\ref{thm:main}, which appears as \cite[Theorem 23]{baspin2022connectivity}.
Our proof simplifies some aspects of their proof.

\begin{theorem}[Theorem 23 of \cite{baspin2022connectivity}]
  Let $\beta \ge 1$ and $c\in (0,1)$  be constants. 
  There exists a constant $\alpha=\alpha(\beta,c)>0$ such that the following holds for all positive integers $n,k,d$.
  Let $\cQ$ be an $\dsl n,k,d \dsr$ quantum codes with connectivity graph $G$
  If $G$ has separation profile $s_G(r) \leq\beta r^c$ for all $r$, then we have the following trade-off between $k$ and $d$:
  \label{thm:bk22}
  \begin{align}
      kd^{2(1-c)} \le \alpha n.
  \end{align}
\end{theorem}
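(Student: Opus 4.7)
The plan is to apply the graph-theoretic framework of Lemma~\ref{lem:bptabc} together with Corollary~\ref{cor:correctable}: it suffices to exhibit two disjoint $d$-correctable sets $A, B \subseteq V$ with $|V \setminus (A \cup B)| \le O(n/d^{2(1-c)})$, since then $k \le |V \setminus (A \cup B)|$. I would construct $A$ and $B$ by applying the $r$-division of Lemma~\ref{lem:rdivision} \emph{twice}, both times at scale $r = d-1$.

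At the first level, apply Lemma~\ref{lem:rdivision} to $G$ at scale $d-1$, producing a partition $V = W_1 \sqcup \cdots \sqcup W_\ell$ with $|W_i| \le d-1$, $|\bdry_- W_i| \le \alpha \beta (d-1)^c$, and $\ell \le \alpha n/(d-1)$. Set $A = \bigcup_i (W_i \setminus \bdry_- W_i)$. Each piece has size strictly less than $d$ and is therefore correctable by the distance property, and distinct pieces are pairwise disjoint and disconnected in $G$ because any edge from $W_i$ to $W_j$ would force one of its endpoints into a removed inner boundary. The Union Lemma then certifies that $A$ is correctable. The complement $C_0 = V \setminus A = \bigcup_i \bdry_- W_i$ satisfies $|C_0| \le \alpha^2 \beta\, n\, (d-1)^{c-1}$.

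At the second level, I would repeat the construction inside $C_0$. Since $G[C_0]$ is a subgraph of $G$, its separation profile still satisfies $s_{G[C_0]}(r) \le \beta r^c$. Applying Lemma~\ref{lem:rdivision} to $G[C_0]$ at scale $d-1$ yields a partition $C_0 = C_{0,1} \sqcup \cdots \sqcup C_{0,\ell'}$ whose pieces each have size at most $d-1$ and small inner boundary inside $G[C_0]$. Define $B$ as the union of those pieces with their $G[C_0]$-inner boundaries removed. Each piece of $B$ has size strictly less than $d$, hence is correctable by the distance property, and distinct pieces are disjoint. The leftover set $C = V \setminus (A \cup B)$ is exactly the union of the $G[C_0]$-inner boundaries, so the accounting gives $|C| \le \alpha^2 \beta \, |C_0| \, (d-1)^{c-1} \le \alpha^4 \beta^2 \, n \, (d-1)^{2(c-1)} = O(n/d^{2(1-c)})$. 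Plugging this into Lemma~\ref{lem:bptabc} finishes the proof.

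The one delicate point I would have to verify is that the second-level pieces of $B$ are disconnected in the full graph $G$, not merely inside $G[C_0]$, since Lemma~\ref{lem:rdivision} only guarantees the latter. Fortunately this is automatic: every vertex of $B$ lies in $C_0$, so both endpoints of any hypothetical edge between two pieces of $B$ would already be in $C_0$, making that edge present in $G[C_0]$ and hence excluded by the $r$-division. Once that observation is in place, the Union Lemma certifies that $B$ is correctable, and the rest is routine bookkeeping on the sizes produced by two applications of Lemma~\ref{lem:rdivision}.
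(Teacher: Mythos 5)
Your proof is correct, and it takes a genuinely different---though closely related---route from the paper. The paper proves this warmup theorem via Lemma~\ref{lem:bk22-2}: a self-contained recursive binary-tree construction that repeatedly extracts balanced separators $S(U)$ of size $\le \beta|U|^c$ until all leaves are smaller than $d$, with the $O(n/d^{1-c})$ accounting done by a geometric series over the separator sizes. Lemma~\ref{lem:bk22} then applies this construction twice, once to $G$ and once to the induced subgraph $G[\bar A]$. You instead bypass Lemma~\ref{lem:bk22-2} entirely and directly invoke the $r$-division of Lemma~\ref{lem:rdivision} at scale $r=d-1$, twice, deleting the inner boundaries $\bdry_- W_i$ to obtain small disconnected pieces. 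Both routes give the same bound up to constants, and your accounting $|C_0| \le \ell \cdot \alpha\beta(d-1)^c \le \alpha^2\beta n (d-1)^{c-1}$ is correct; the argument that the trimmed pieces are disconnected in $G$ (both at the first and the second level) is also correct. What the student's approach buys is economy---it reuses an off-the-shelf decomposition and avoids re-deriving the geometric-series bound; what the paper's approach buys is that the same tree construction is needed anyway for the key Lemma~\ref{lem:correct-1}, so introducing it in the warmup makes the later proof flow more naturally. Your observation that $s_{G[C_0]}(r) \le s_G(r)$ for induced subgraphs, and your check that disconnectedness in $G[C_0]$ implies disconnectedness in $G$ for sets inside $C_0$, are exactly the right points to be careful about, and both hold as you argue.
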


\noindent By Lemma~\ref{lem:bptabc} and Corollary~\ref{cor:correctable}, Theorem~\ref{thm:bk22} follows from the following graph theoretic lemma.
\begin{lemma}[Implicit in \cite{baspin2022connectivity}]
  Let $\beta \ge 1$ and $c\in (0,1)$  be constants. 
  There exists a constant $\alpha=\alpha(\beta,c)>0$ such that the following holds for all positive integers $n\ge d$.
  Let $G$ be a graph on $n$ vertices with separation profile $s_G(r)\le \beta r^c$.
  Then there exists a partition $A\sqcup B\sqcup C$ of $G$ such that $A$ and $B$ are $d$-correctable with respect to $G$ and $|C|\le \alpha\cdot n/d^{2(1-c)}$.
  \label{lem:bk22}
\end{lemma}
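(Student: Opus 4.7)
The plan is to apply Lemma~\ref{lem:rdivision} twice: once to $G$ to construct $A$, and once to the subgraph induced by the complement of $A$ to construct $B$. A single application yields only the weaker bound $kd^{1-c}=O(n)$; the crucial observation is that the residual vertex set carries the same polynomial separation profile and so can be partitioned again to shrink $C$ by a second factor of $d^{1-c}$.

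Concretely, first apply Lemma~\ref{lem:rdivision} to $G$ with parameter $r$ slightly less than $d$ (say $r=\lceil d/2\rceil$, absorbing factors into the final constant) to obtain a partition $V=W_1\sqcup\cdots\sqcup W_\ell$ with $|W_i|\le r<d$, $|\bdry_-W_i|\le\alpha s_G(r)\le\alpha\beta r^c$, and $\ell\le\alpha n/r$. Set $W_i'=W_i\setminus\bdry_-W_i$; by the definition of $\bdry_-$ the sets $W_1',\dots,W_\ell'$ are pairwise disjoint and pairwise disconnected in $G$. Each $W_i'$ is correctable (Distance plus Trivial), so $A:=\bigsqcup_i W_i'$ is correctable by the Union Lemma. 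Letting $Y:=V\setminus A=\bigcup_i\bdry_-W_i$, the counts give
\begin{equation*}
|Y|\le \ell\cdot \alpha\beta r^c\le \alpha^2\beta\,n\,r^{c-1}=O(n/d^{1-c}).
\end{equation*}

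Next apply Lemma~\ref{lem:rdivision} to the induced subgraph $G[Y]$ with the same parameter $r$. This is valid because any subgraph of $G[Y]$ is a subgraph of $G$, so $s_{G[Y]}(r)\le s_G(r)\le\beta r^c$. We obtain a partition $Y=Y_1\sqcup\cdots\sqcup Y_m$ with $|Y_j|\le r<d$, $|\bdry_-^{G[Y]}Y_j|\le\alpha\beta r^c$, and $m\le\alpha|Y|/r$. Set $Y_j':=Y_j\setminus\bdry_-^{G[Y]}Y_j$; each $Y_j'$ is correctable (size less than $d$). The pieces are pairwise disconnected in $G[Y]$, and since both endpoints of any putative edge lie in $Y$, such an edge would already be present in $G[Y]$; thus $Y_1',\dots,Y_m'$ are pairwise disconnected in $G$ as well. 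The Union Lemma then yields that $B:=\bigsqcup_j Y_j'$ is correctable in $G$.

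Finally, $C:=V\setminus(A\cup B)=Y\setminus B=\bigcup_j\bdry_-^{G[Y]}Y_j$, and the counts give
\begin{equation*}
|C|\le m\cdot\alpha\beta r^c\le \alpha^2\beta\,|Y|\,r^{c-1}=O(n/d^{2(1-c)}),
\end{equation*}
as required; the constant depends only on $\alpha,\beta,c$. The only nontrivial points are the two already noted: verifying that the $Y_j'$ remain pairwise disconnected in the full graph $G$, and verifying that the $r$-division lemma may be reapplied to $G[Y]$ because separation profiles are inherited by subgraphs. Beyond these, the argument is a clean two-level iteration of the strip-and-unite strategy, and the main conceptual step is recognizing that one level is insufficient and that the leftover boundary set itself admits the same treatment.
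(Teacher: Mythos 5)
Your proof is correct. The high-level strategy---two iterations of a strip-and-unite decomposition, applying the second iteration to the residual set $Y$ and relying on the fact that separation profiles are inherited by induced subgraphs and that disconnection in $G[Y]$ implies disconnection in $G$ (since both endpoints lie in $Y$)---is exactly the strategy the paper uses. The difference is in which decomposition lemma drives each iteration: the paper invokes its Lemma~\ref{lem:bk22-2} (the recursive binary-tree separator construction), which directly outputs a set $A$ of small disconnected components together with a bound on $|V\setminus A|$, whereas you invoke the $r$-division result, Lemma~\ref{lem:rdivision}, and then perform the boundary-stripping $W_i\mapsto W_i\setminus \bdry_-W_i$ yourself to convert small-boundary cells into disconnected components. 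The two lemmas are essentially interchangeable for this purpose: Lemma~\ref{lem:bk22-2} bakes the boundary-stripping into its statement and is proved in the paper by the same recursive-separator machinery that underlies Lemma~\ref{lem:rdivision}. Your version is slightly more modular, leaning on the cited external result rather than the paper's self-contained proof; the paper's version avoids the small extra step of stripping boundaries and arguing about the induced-subgraph boundary operator $\bdry_-^{G[Y]}$. Both routes give the same bound with the same dependence on $\alpha,\beta,c$, and both correctly identify the two delicate points you flag (inheritance of the separation profile by $G[Y]$, and promotion of disconnection from $G[Y]$ to $G$).
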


To find the sets $A$ in $B$ above, we use the following lemma.
This lemma appears in \cite[Lemma 21]{baspin2022connectivity}, but we present the proof here for exposition, with a simpler analysis of the bound on the size of $V\setminus A$.

\begin{lemma}[\cite{baspin2022connectivity}]
  \label{lem:bk22-2}
  Let $\beta \ge 1$ and $c\in (0,1)$  be constants. 
  There exists a constant $\alpha=\alpha(\beta,c)>0$ such that the following holds for all positive integers $n\ge d$.
  Let $G$ be a graph on $n$ vertices with separation profile $s_G(r)\le \beta r^c$ for all $r$.
  There exists a vertex subset $A$ consisting of disjoint disconnected components of size at most $d-1$ such that $|V\setminus A|\le (\alpha n)/(d^{1-c})$.
\end{lemma}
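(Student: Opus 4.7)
The plan is to apply the $r$-division lemma (Lemma~\ref{lem:rdivision}) with parameter $r = d-1$ and then take $A$ to consist of the \emph{interiors} of the pieces of the resulting partition. More precisely, I would proceed as follows.

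First, invoke Lemma~\ref{lem:rdivision} with $r = d-1$ to obtain a partition $W_1,\dots,W_\ell$ of $V$ with $|W_i|\le d-1$, $|\bdry_- W_i|\le \alpha_0\cdot s_G(d-1) \le \alpha_0\beta(d-1)^c$, and $\ell \le \alpha_0 n/(d-1)$, where $\alpha_0=\alpha_0(\beta,c)$ is the constant supplied by that lemma. Then define
\[
A_i \;=\; W_i \setminus \bdry_- W_i, \qquad A \;=\; \bigsqcup_{i=1}^{\ell} A_i .
\]
Each $A_i$ has size at most $|W_i|\le d-1$, so it remains to verify that distinct $A_i$'s are vertex-disjoint and pairwise disconnected (i.e., there are no edges between them in $G$), and to bound $|V\setminus A|$.

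For the disconnection claim: the $W_i$'s are disjoint, so the $A_i\subseteq W_i$ are disjoint. If $u\in A_i$ has a neighbor $v \in A_j$ with $j\ne i$, then $v\notin W_i$, so by definition $u\in \bdry_- W_i$, contradicting $u\in A_i = W_i\setminus\bdry_- W_i$. Hence each connected component of the subgraph induced by $A$ lies entirely inside a single $A_i$, and in particular has size at most $d-1$, matching the conclusion.

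For the size bound on the complement, note
\[
V\setminus A \;=\; \bigcup_{i=1}^\ell \bdry_- W_i, \qquad |V\setminus A| \;\le\; \sum_{i=1}^\ell |\bdry_- W_i| \;\le\; \ell\cdot \alpha_0\beta(d-1)^c \;\le\; \frac{\alpha_0^2 \beta\, n}{(d-1)^{1-c}}.
\]
Absorbing the $(d-1)$ versus $d$ discrepancy into the constant (say via $d-1\ge d/2$ for $d\ge 2$, handling $d=1$ trivially by taking $A=\emptyset$), this gives $|V\setminus A|\le \alpha n/d^{1-c}$ for an appropriate $\alpha=\alpha(\beta,c)$.

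There is no serious obstacle here: the entire argument is the $r$-division lemma combined with the observation that stripping inner boundaries both enforces the connected-component size constraint and costs only $\ell\cdot s_G(r)\approx n\cdot r^{c-1}$ vertices, which is minimized near $r=d$. The only mild care needed is to make the disconnection condition rigorous from the definition of $\bdry_-$, and to keep the constants uniform in $n$ and $d$.
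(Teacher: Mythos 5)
Your proof is correct, but it takes a different route from the paper's. The paper proves Lemma~\ref{lem:bk22-2} directly by recursively removing balanced separators, building an explicit binary tree $\mathcal{T}$ of nested vertex sets until all leaves have size $<d$, and then bounding $|V\setminus A|=\sum_{U}|S(U)|$ via a geometric series over the tree levels (grouping internal nodes by size scale $(2/3)^i n$). You instead invoke Lemma~\ref{lem:rdivision} (the Henzinger et al.\ $r$-division lemma) as a black box with $r=d-1$ and strip inner boundaries to make the pieces mutually disconnected. Both arguments are sound and give the same bound up to constants; yours is shorter and more modular, and it closely mirrors how the paper itself later uses Lemma~\ref{lem:rdivision} in the proof of Lemma~\ref{lem:correct-2}. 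The paper's hands-on tree construction has the pedagogical advantage that exactly the same recursion (and the geometric-series boundary estimate) reappears in Lemma~\ref{lem:correct-1}, so presenting it explicitly avoids relying on a cited lemma and builds intuition that is reused. One small point of care that you did handle: the $d=1$ edge case (where $r=d-1=0$ would be degenerate), dispatched by taking $A=\emptyset$; and you correctly argue that each connected component of $G[A]$ lies inside a single $A_i$ by the definition of $\bdry_- W_i$, which is exactly the needed disconnection property.
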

    \begin{figure}
      \begin{center}
        \begin{minipage}[c]{0.55\textwidth}
        \begin{forest}
          [$V$ [$U_0$ [$U_{00}$ [$U_{000}$,fill=blue!20] [$U_{001}$,fill=blue!20]]  [$U_{01}$ [$U_{010}$,fill=blue!20] [$U_{011}$,fill=blue!20] ]][$U_1$ [$U_{10}$ [$U_{100}$,fill=blue!20] [$U_{101}$,fill=blue!20]][$U_{11}$,fill=blue!20 ]]]
        \end{forest}
        \end{minipage}
        \begin{minipage}[c]{0.38\textwidth}
        \begin{tikzpicture}[scale=0.3]
          \def\ua{4.8}
          \def\ub{5.3}
          \def\uxa{4.8}
          \def\uxb{5.2}
          \def\uya{6.2}
          \def\uyb{6.6}
          \fill[blue!20] (0,0) rectangle (10,10);
          \draw[fill=red!100] (0,\ua) rectangle (10,\ub);
          \node (sv) at (13,6) {$S(V)$};
          \draw[->] (sv) to[bend left=0] (10.3,{(\ua+\ub)/2});
          \draw[fill=red!50] (\uxa,0) rectangle (\uxb,\ua);
          \node (su0) at (5,-3) {$S(U_0)$};
          \draw[->] (su0) to [bend right=0] ({(\uxa+\uxb)/2},-0.3);

          \draw[fill=red!50] (\uya,\ub) rectangle (\uyb,10);
          \node (su1) at (9,13) {$S(U_1)$};
          \draw[->] (su1) to [bend right=10] ({(\uya+\uyb)/2},10.3);

          \draw[fill=red!20] (0,{0.4*\ua}) rectangle (\uxa,{0.45*\ua});
          \node (su01) at (14,{0.4*\ua}) {$S(U_{01})$};
          \draw[->] (su01) to[bend left=0] (10.3,{0.52*\ua});
          \draw[fill=red!20] (\uxb, {0.5*\ua}) rectangle (10,{0.55*\ua});
          \node (su00) at (-4,{0.4*\ua}) {$S(U_{00})$};
          \draw[->] (su00) to[bend right=0] (-0.3,{0.42*\ua});

          % todo transpose if uncomment
          %\draw[fill=gray!20] ({0.5*\ua+\ub}, \uyb) rectangle ({0.55*\ua+\ub},100);
          %\node (su10) at (100,130) {$S(U_{00})$};
          %\draw[->] (su10) to[bend left=10] ({0.52*\ua+\ub},100);
          \draw[fill=red!20] (0,{0.5*\ua + \ub}) rectangle (\uya,{0.55*\ua+\ub});
          \node (su10) at (-4,{0.5*\ua+\ub}) {$S(U_{10})$};
          \draw[->] (su10) to[bend right=0] (-0.3,{0.52*\ua+\ub});

          \node at (8.6,7.5) {$U_{11}$};
          \node at (3.3,6.2) {$U_{101}$};
          \node at (3.3,8.8) {$U_{100}$};
          \node at (2.5,3.5) {$U_{001}$};
          \node at (2.5,1.0) {$U_{000}$};
          \node at (7.5,3.8) {$U_{011}$};
          \node at (7.5,1.3) {$U_{010}$};

        \end{tikzpicture}
        \end{minipage}
      \end{center}
      \caption{A binary tree obtained by recursively removing balanced separators. Each set is a subset of its parent. Each parent $U_*$ has two disjoint and disconnected children $U_{*0}$ and $U_{*1}$, which, together with the separator $S(U_*)$, partition $U_*$. The union of the \colorbox{blue!20}{blue sets} forms the set $A$ in Lemma~\ref{lem:bk22-2}.}
      \label{fig:tree}
    \end{figure}
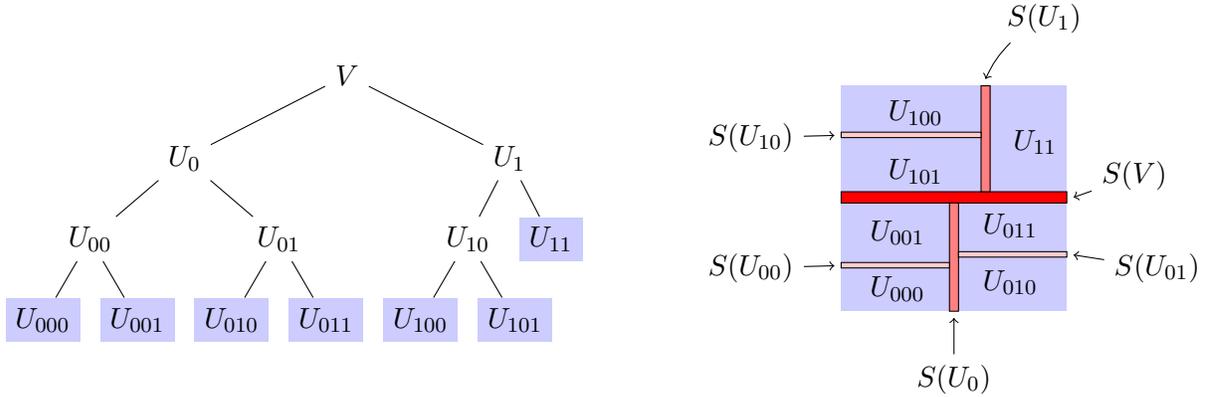
\begin{proof}
  This proof constructs a new graph $\cT$ from $G$.
  To avoid confusion, vertices of $\cT$ are referred to as nodes.
  
  Recursively construct a rooted binary tree $\mathcal{T}$ whose nodes are labeled by subsets of vertices of $G$.
  Start with the root node, labeled by the entire vertex set $V$.
  While there is a leaf node $U$ with size at least $d$, give node $U$ two children labeled $U_0$ and $U_1$, such that $U_0$ and $U_1$ are disjoint and disconnected vertex subsets, each of size at most $2|U|/3$, and such that $S(U)\defeq U\setminus (U_0\cup U_1)$ has size at most $\beta |U|^c$.
  Such children exists by the condition of the separation profile $s_G(|U|)\le \beta\cdot |U|^c$.
  Since each child is a strict subset of its parent, this process clearly terminates, giving a finite binary tree.
  See Figure~\ref{fig:tree} for an illustration. 

  We now choose vertex subset $A$ to be the union of all leaves of the tree, each of which has size at most $d-1$ by definition.
  Further, it is easy to check that all leaves are pairwise disjoint by construction.
  Thus, $A$ has the desired property. We now bound the size of $V\setminus A$.

  Since any child $U'$ of $U$ has size at most $|U'|\le \frac{2}{3}|U|$, and the only sets intersecting a node $U$ are its descendants and ancestors, we have that, for all $r$, the sets whose size is in $(2r/3,r]$ are all disjoint.
  This implies that there are at most $3n/2r$ sets with size in $(2r/3,r]$.
  
  \noindent Since all internal nodes in the tree have size at least $d$, we have
  \begin{align}
    |V\setminus A|
    &=  \sum_{U\text{ internal to }\mathcal{T}}^{} |S(U)| \nonumber\\
    &\le  \sum_{U\text{ internal to }\mathcal{T}}^{} \beta|U|^c \nonumber\\
    &\le  \sum_{\substack{r=(2/3)^in\\i=0,1,\dots,\floor{\log_{2/3}(d/n)}}}^{} \sum_{\substack{U\text{ internal to }\mathcal{T}\\ |U|\in (2r/3, r]}}^{} \beta r^c \nonumber\\
    &\le  \sum_{\substack{r=(2/3)^in\\i=0,1,\dots,\floor{\log_{2/3}(d/n)}}}^{} \frac{3n}{2r}\beta r^c
    \le \frac{\alpha \beta n}{d^{1-c}}
  \label{eq:ancestor}
  \end{align}
  where $\alpha=\alpha(c)$ is a constant that depends only on $c$,  and the last inequality holds because the sum is a geometric series, so the sum is within a constant factor of the largest term, which occurs when $r\sim d$.
\end{proof}

\noindent We now can prove Lemma~\ref{lem:bk22}, and thus Theorem~\ref{thm:bk22}.
\begin{proof}[Proof of Lemma~\ref{lem:bk22}]
  Apply Lemma~\ref{lem:bk22-2} to the graph $G$ and let $\alpha$ be the implied constant.
  We obtain a partition $A\sqcup \bar A$ of the vertex set $V$ such that $A$ has disconnected components of size at most $d-1$ and
  \begin{align}
  |\bar A|\le \alpha \cdot \frac{n}{d^{1-c}}.
  \end{align}
  Each component of $A$ correctable (Distance property of Definition~\ref{def:correctable}), so $A$ is correctable.
  Apply Lemma~\ref{lem:bk22-2} again to the subgraph $G[\bar A]$ of $G$ induced by $\bar A$ (the graph with vertex set $\bar A$ whose edges are the edges of $G$ with both endpoints in $\bar A$).
  Then there exists a partition $\bar A = B\sqcup C$ such that $B$ consists of disconnected components of size at most $d-1$ and
  \begin{align}
  |C|\le \alpha^2\cdot \frac{n}{d^{2(1-c)}}.
  \end{align}
  Since these components are disconnected in $G[\bar A]$, they are also disconnected in $G$.
  Each component is correctable, so $B$ is correctable.
  Thus, we have found our desired partition.
\end{proof}

\subsection{Proof of Theorem~\ref{thm:main} and Theorem~\ref{thm:main-2}}

We now prove Theorem~\ref{thm:main}. We establish Theorem~\ref{thm:main-2} along the way.
\begin{theorem*}[Theorem~\ref{thm:main}, formal]
  Let $\beta \ge 1$ and $c\in (0,1)$  be constants. 
  There exists a constant $\alpha=\alpha(\beta,c)>0$ such that the following holds for all positive integers $n,k,d$.
  Let $\cQ$ be a $\dsl n,k,d \dsr$ quantum code with connectivity graph $G$.
  If the separation profile of $G$ satisfies $s_G(r) \le \beta r^c$ for $c \in (0,1]$, then we have the following trade-off between $k$ and $d$:
    \begin{align*}
        kd^{\frac{1-c^2}{c}} \le \alpha n.
    \end{align*}
\end{theorem*}
To prove Theorem~\ref{thm:main}, we improve the graph-theoretic lemma, Lemma~\ref{lem:bk22}, to the following:
\begin{lemma}
  Let $\beta \ge 1$ and $c\in (0,1)$  be constants. 
  There exists a constant $\alpha=\alpha(\beta,c)>0$ such that the following holds for all positive integers $n\ge d$.
  Let $G$ be a graph on $n$ vertices with separation profile $s_G(r)\le \beta\cdot r^c$ for all $r$.
  Then there exists a partition $A\sqcup B\sqcup C$ of $G$ such that $A$ and $B$ are $d$-correctable and $|C|\le \alpha\cdot n/d^{(1-c)(1+1/c)}$.
  \label{lem:main}
\end{lemma}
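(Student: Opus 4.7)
The plan is to follow the two-step strategy of Lemma~\ref{lem:bk22}, keeping Step 1 unchanged but sharpening Step 2 via the Expansion Lemma. In Step 1, apply Lemma~\ref{lem:bk22-2} to $G$ to obtain a correctable set $A$ consisting of disjoint disconnected components of size less than $d$, with $|V\setminus A|\le O(n/d^{1-c})$.

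In Step 2, I aim to find $B\subseteq \bar A := V\setminus A$ correctable in $G$ with $|\bar A\setminus B|\le O(|\bar A|/d^{(1-c)/c})$. The key observation is that any subgraph of size at most $r^* := \Theta(d^{1/c})$ has separators of size less than $d$, so such separators can be absorbed into larger correctable sets via the Expansion Lemma rather than discarded (as they are in the proof of Lemma~\ref{lem:bk22-2}). Apply Lemma~\ref{lem:rdivision} to $G[\bar A]$ with $r = r^*$ chosen small enough that $\alpha\beta(r^*)^c<d$, obtaining pieces $W_1,\dots,W_\ell \subseteq \bar A$ of size at most $r^*$, with $|\bdry_{-}^{G[\bar A]} W_i|<d$ and $\ell\le O(|\bar A|/r^*)$. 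Set $B_i := W_i\setminus \bdry_{-}^{G[\bar A]} W_i$; the $B_i$'s are pairwise disjoint and disconnected in $G$, since any direct edge between them would lie in $G[\bar A]$ and contradict the interior condition. The main technical claim is that each $B_i$ is correctable in $G$, proven by applying the binary tree decomposition of Lemma~\ref{lem:bk22-2} to $G[B_i]$: leaves of size less than $d$ are correctable by Distance, and at each internal tree node the two correctable children combine via Union while the separator of size less than $d$ is absorbed via Expansion. Granted this, $B := \bigsqcup_i B_i$ is correctable by Union, and
\begin{align*}
|\bar A\setminus B| \le \sum_i |\bdry_{-}^{G[\bar A]} W_i|\le O(\ell\cdot(r^*)^c) = O\!\left(|\bar A|/(r^*)^{1-c}\right) = O\!\left(|\bar A|/d^{(1-c)/c}\right).
\end{align*}
Combining Steps 1 and 2 yields $|C| \le O(n/d^{(1-c)+(1-c)/c}) = O(n/d^{(1-c)(1+1/c)})$ as desired.

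The main obstacle is showing that each $B_i$ is correctable in $G$ via the recursive Expansion argument. At an internal tree node $P$ of the decomposition of $G[B_i]$, the outer boundary $\bdry_{+}^G(P_0\cup P_1)$ in the ambient graph $G$ need not be contained in the separator of $P$ alone: it may also include vertices in $B_i\setminus P$ (outside $P$ but still inside $B_i$), in $\bdry_{-}^{G[\bar A]} W_i$, or in $A$ (via direct edges from $\bar A$ to $A$). Closing this gap requires enlarging the Expansion target $T$ at each recursive step to cover these extra neighbors and verifying that $|T|<d$ so that $T$ remains correctable by Distance. This will likely require shrinking $r^*$ by a constant factor and exploiting Trivial to extract small correctable subsets of $A$ for the cross-boundary terms.
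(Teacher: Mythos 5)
Your proposal reverses the order of the two steps relative to the paper, and this reversal introduces a gap that cannot be patched the way you suggest. The paper applies the Expansion-Lemma-based decomposition (its Lemma~\ref{lem:correct-2}, which relies on Lemma~\ref{lem:correct-1}) \emph{first}, directly to the ambient graph $G$, obtaining a correctable $A$ with $|V\setminus A|\le O(n/d^{(1-c)/c})$; only then does it apply the Distance-only decomposition (Lemma~\ref{lem:bk22-2}) to the induced subgraph $G[V\setminus A]$. You do the opposite: the Distance-only step on $G$, then the Expansion-based step on $G[\bar A]$. The reason the paper's order matters is precisely the obstacle you flagged at the end: the Expansion Lemma, both as a property of quantum codes and in the definition of a $d$-correctable family, requires $T$ to contain the outer boundary of $U$ \emph{in the original graph $G$}. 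When you run the recursive tree argument inside $G[\bar A]$, the boundary of $B_i$ (or of an internal tree node $P_0\cup P_1$) in $G$ includes every vertex of $A$ adjacent to it, and this set is not bounded by the separator profile --- it can be arbitrarily large even when $r^*$ is shrunk by any constant factor. So ``verifying that $|T|<d$'' fails, and the patch via Trivial does not help either: Trivial gives you a correctable subset of $A$, but you would then need the union of that subset with the small separator pieces to be correctable, and the Union Lemma does not apply because they need not be disconnected. This is not a technicality you can engineer around with the stated toolkit; the paper's Remark after Lemma~\ref{lem:main} explicitly identifies this as the reason a second Expansion-based pass on the induced subgraph cannot be done in a black-box way.

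The fix is to swap your two steps. Apply the Expansion-based argument (your Step 2, using Lemma~\ref{lem:rdivision} and the tree-based correctability of small-boundary sets) directly to $G$, where all boundaries are boundaries in $G$ and are controlled by $s_G$ and by $\alpha_{\ref{lem:rdivision}}\cdot s_G(r)$; this gives a correctable $A$ with $|V\setminus A|\le O(n/d^{(1-c)/c})$. Then apply the simple Lemma~\ref{lem:bk22-2} to $G[V\setminus A]$ to peel off $B$ consisting of disconnected components of size $<d$. Components of size $<d$ are correctable by the Distance property regardless of what is outside $V\setminus A$, so no Expansion step is needed on the induced subgraph, and the cross-boundary issue never arises. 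The exponents multiply the same way to give $|C|\le O(n/d^{(1-c)(1+1/c)})$. Your Step 2 in isolation, applied to $G$ rather than to $G[\bar A]$, is essentially the paper's Lemma~\ref{lem:correct-1} and Lemma~\ref{lem:correct-2}.
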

\noindent By applying our framework (Lemma~\ref{lem:bptabc} and Corollary~\ref{cor:correctable}), this graph-theoretic lemma (Lemma~\ref{lem:main}) implies our desired quantum code bound (Theorem~\ref{thm:main}).

The key improvement is captured in Lemma~\ref{lem:correct-1}, which improves over Lemma~\ref{lem:bk22-2} when the sets have small boundary.
Lemma~\ref{lem:correct-1} states that sets $W$ of $d^{1/c}$ vertices with boundary $\ll d$ are correctable. 
To prove Lemma~\ref{lem:correct-1}, we recursively remove separators until the graph has connected components of size less than $d$. Each resulting component is correctable, and crucially the boundary of each component has size $<d$ and thus correctable, and you can recursively use the expansion lemma to deduce the whole set $W$ is correctable.
This is analogous to \cite{bravyi2010tradeoffs}, who showed that for geometrically local codes, $d\times d$ grids of qubits are correctable as their boundary has size $\sim d$.

    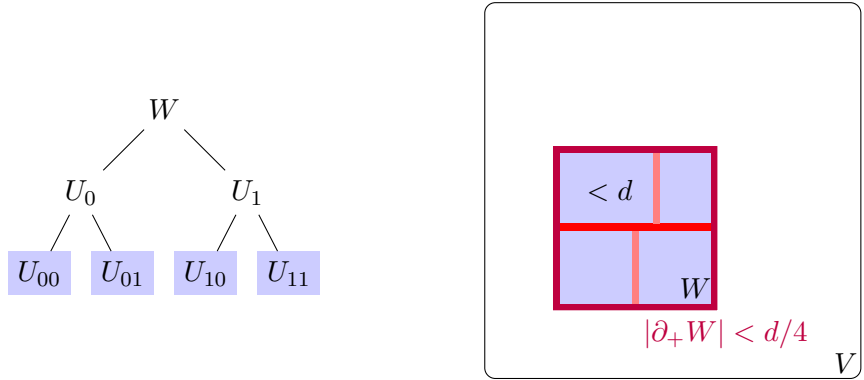
\begin{figure}
      \begin{center}
        \begin{minipage}[c]{0.5\textwidth}
          \begin{center}
            \begin{forest}
            [$W$ [$U_0$ [$U_{00}$,fill=blue!20] [$U_{01}$,fill=blue!20 ]][$U_1$ [$U_{10}$,fill=blue!20] [$U_{11}$,fill=blue!20 ]]]
            \end{forest}
          \end{center}
        \end{minipage}
        \begin{minipage}[c]{0.48\textwidth}
        \begin{tikzpicture}[scale=0.2]
          \def\ua{4.8}
          \def\ub{5.3}
          \def\uxa{4.8}
          \def\uxb{5.2}
          \def\uya{6.2}
          \def\uyb{6.6}
          \draw[rounded corners] (-5,-5) rectangle (20,20);
          \draw[color=white,fill=purple!100] (-0.5,-0.5) rectangle (10.5,10.5);
          \fill[blue!20] (0,0) rectangle (10,10);
          \fill[red!100] (0,\ua) rectangle (10,\ub);
          \fill[red!50] (\uxa,0) rectangle (\uxb,\ua);
          \fill[red!50] (\uya,\ub) rectangle (\uyb,10);

          \node at (19,-4) {$V$};
          \node at (9,1) {$W$};
          \node[text=purple!100] at (11,-2) {$|\partial_+ W| < d/4$};
          \node at (3.3,7.5) {$<d$};
        \end{tikzpicture}
        \end{minipage}
      \end{center}
      \caption{Lemma~\ref{lem:correct-1}. $\sim d^{1/c}$-sized vertex subsets $W$ with small boundary are correctable by the Expansion property.}
      \label{fig:tree-2}
    \end{figure}

\begin{lemma}
  Let $\beta \ge 1$ and $c\in (0,1)$  be constants.
  The following holds for all positive integers $n\ge d$.
  Let $G$ be a graph on $n$ vertices with separation profile $s_G(r)\le \beta r^c$ for all $r$.
  For $\varepsilon' = \frac{1-(2/3)^c}{20\beta}$, any set $W$ of at most $(\varepsilon d)^{1/c}$ vertices with outer boundary at most $d/4$ is $d$-correctable.
  \label{lem:correct-1}
\end{lemma}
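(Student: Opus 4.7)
The plan is to prove correctability of $W$ by bottom-up induction on a binary separator tree, carrying along at each node an auxiliary ``boundary'' set small enough to be correctable by Distance. First I would recursively construct a binary tree $\mathcal{T}$ rooted at $W$ (nodes identified with vertex sets): for each node $U$ with $|U|\ge d$, use the separation profile to split $U=U_0\sqcup S(U)\sqcup U_1$ with $U_0,U_1$ disconnected in $G[U]$, $|U_0|,|U_1|\le \tfrac{2}{3}|U|$, and $|S(U)|\le \beta|U|^c$, stopping when $|U|<d$. Because disconnection in $G[U]$ of two subsets of $U$ implies disconnection in $G$, the sibling sets $U_0$ and $U_1$ are disconnected in $G$; by applying this at deepest common ancestors, any two leaves of $\mathcal{T}$ are disconnected in $G$.

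Define $T_U := S^{\mathrm{anc}}(U) \cup \bdry_+ W$, where $S^{\mathrm{anc}}(U)$ is the union of the separators $S(U')$ over all proper ancestors $U'$ of $U$ in $\mathcal{T}$. The inductive statement I would prove is that $U \cup T_U$ is $d$-correctable for every node $U$. The key structural claim is $\bdry_+ U \subseteq T_U$: any vertex $v\in W\setminus U$ adjacent to $U$ cannot lie in a cousin subtree (by the disconnection property applied at the deepest common ancestor of $U$ and the node owning $v$), so must lie in $S(U')$ for some ancestor $U'$ of $U$; any neighbor of $U$ outside $W$ automatically lies in $\bdry_+ W$. To bound $|T_U|$, note the ancestor at depth $i$ has size at most $(2/3)^i|W|$, so a geometric sum gives $|S^{\mathrm{anc}}(U)| \le \sum_{i\ge 0}\beta((2/3)^i|W|)^c = \frac{\beta\,|W|^c}{1-(2/3)^c}$. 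Using $|W|^c \le \varepsilon' d$ and $\varepsilon' = \frac{1-(2/3)^c}{20\beta}$, this is at most $d/20$, and hence $|T_U|\le d/20 + d/4 < d$.

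The induction itself is short. At a leaf $L$, Distance gives both $L$ and $T_L$ correctable, and since $\bdry_+ L \subseteq T_L$, the Expansion Lemma yields $L\cup T_L$ correctable. At an internal node $U$ with children $U_0,U_1$, the Trivial property extracts correctability of each $U_i$ from the inductive hypothesis, and the Union Lemma gives $U_0\cup U_1$ correctable (they are disconnected in $G$). Setting $Y := S(U)\cup T_U$, we have $|Y|\le |S(U)|+|T_U| \le d/20 + 3d/10 < d$ using $|S(U)|\le \beta|W|^c \le d/20$, so $Y$ is correctable by Distance, and the same disconnection argument shows $\bdry_+(U_0\cup U_1) \subseteq Y$. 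A final application of the Expansion Lemma yields $U\cup T_U = (U_0\cup U_1)\cup Y$ correctable. Applying this at the root gives $W\cup \bdry_+ W$ correctable, and the Trivial property then gives $W$ correctable.

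The main obstacle is choosing the right inductive statement. A direct attempt to show each $U$ is correctable fails in the inductive step: although $U_0$ and $U_1$ are disconnected, $\bdry_+(U_0\cup U_1)$ is \emph{not} contained in $S(U)$ alone, since it also contains separator vertices from ancestors of $U$ as well as vertices in $\bdry_+ W$. Carrying the auxiliary set $T_U$ through the induction is precisely what makes Expansion applicable at each level, and the specific constant $\varepsilon' = (1-(2/3)^c)/(20\beta)$ is tuned so that the geometric sum bounding $|S^{\mathrm{anc}}(U)|$ collapses to a small constant fraction of $d$ regardless of the depth of $U$ in $\mathcal{T}$.
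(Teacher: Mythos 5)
Your proof is correct and takes essentially the same route as the paper: the same recursively-built binary separator tree rooted at $W$, the same bottom-up induction using Distance at the leaves, Union on disconnected siblings, and Expansion to pass from $U_0\cup U_1$ to the parent. The difference is only in the inductive invariant. You carry the auxiliary set $T_U$ and prove ``$U\cup T_U$ is correctable,'' and you motivate this by arguing that a direct induction on ``$U$ is correctable'' fails because $\partial_+(U_0\cup U_1)$ is not contained in $S(U)$ alone. That diagnosis is not quite right: the Expansion Lemma does not require the boundary to equal a prescribed set, only that the set you apply it to \emph{contains} $\partial_+(U_0\cup U_1)$ and is correctable, and correctability of the boundary can come from Distance once you know it is small. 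The paper does exactly this---it bounds $|\partial_+ U|\le |\partial_+ W| + \sum_{U'\text{ anc.}}|S(U')| < d/3$ for \emph{every} node $U$ of the tree (this is precisely the geometric-sum computation you used to bound $|T_U|$), so $|\partial_+(U_0\cup U_1)|<2d/3<d$ directly, and no auxiliary set needs to be threaded through the induction. Where your version is in fact a bit more careful: the paper's last step asserts $U\subseteq (U_0\cup U_1)\cup\partial_+(U_0\cup U_1)$, which silently assumes every vertex of $S(U)$ is adjacent to $U_0\cup U_1$; a separator need not have that property. Your choice $Y=S(U)\cup T_U$ includes $S(U)$ explicitly and avoids this minor gap (the paper's argument is also repaired by taking the correctable superset of the boundary to be $\partial_+(U_0\cup U_1)\cup S(U)$, whose size is still well below $d$). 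Both proofs are sound; yours is marginally heavier on bookkeeping, the paper's marginally looser in one inclusion.
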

\begin{proof}
  Recursively construct a rooted binary tree $\mathcal{T}$ whose nodes are labeled by subsets of vertices of $G$.
  Start with a root vertex, labeled by the set $W$.
  While there is a leaf $U$ with size at least $d$, give vertex $U$ two children labeled $U_0$ and $U_1$, such that $U_0$ and $U_1$ are disjoint and disconnected vertex subsets, each of size at most $2|U|/3$,  and such that $S(U)\defeq U\setminus (U_0\cup U_1)$ has size at most $\beta |U|^c$.
  Such children exists by the condition of the separation profile $s_G(|U|)\le \beta\cdot |U|^c$.
  Since each child is a strict subset of its parent, this process clearly terminates, giving a finite binary tree.
  See Figure~\ref{fig:tree-2}.

  By construction of the tree, the boundary of every $U$ in the tree $\mathcal{T}$ is a subset of $\partial_+ W \cup \bigcup_{\text{$U'$ ancestor of $U$}}^{} S(U')$.
  This is because $\partial_+ W$ contains all neighbors of vertices of $U$ outside $W$, and $\bigcup_{\text{$U'$ ancestor of $U$}}^{} S(U')$ contains all neighbors of $U$ inside $W$.
  Thus, $\partial_+ U$ has size at most
  \begin{align}
    |\partial_+ U|
    &\le |\partial_+ W| + \sum_{\text{$U'$ ancestor of $U$}}^{} |S(U')|  \nonumber\\
    &\le d/4 + \sum_{\text{$U'$ ancestor of $U$}}^{} \beta|U'|^{c}  \nonumber\\
    &\le d/4 + \sum_{i=0}^{\infty} \beta\left( \left(\frac{2}{3}\right)^i|W| \right)^c \nonumber\\
    &= d/4 + \frac{\beta|W|^c}{1-(2/3)^c}
    \le \left(1/4 + \frac{\varepsilon\beta}{1-(2/3)^c}\right) d < d/3.
    \label{eq:boundary}
  \end{align}
  The third inequality holds because the largest ancestor has size $|W|$, and each subsequent descendant's size is a factor less-than-$2/3$ smaller.

  We inductively show every set $U$ labeling a node in the tree $\mathcal{T}$ is correctable.
  The leaves of $\mathcal{T}$ are correctable because their size is $<d$.
  Suppose that a node $U$ in the tree is such that its children $U_0$ and $U_1$ are correctable (Distance property).
  Then $U_0\cup U_1$ is correctable as they are disjoint and disconnected (Union Lemma).
  Furthermore, by \eqref{eq:boundary}, $|\partial(U_0\cup U_1)| \le |\partial_+ U_0| + |\partial_+ U_1| \le d/3 + d/3 < d-1$ so $\partial(U_0\cup U_1)$ is correctable (Expansion Lemma).
  Thus, the set $\partial(U_0\cup U_1)\cup (U_0\cup U_1)$ is correctable, so $U$, which is a subset of $\partial(U_0\cup U_1)\cup (U_0\cup U_1)$, is also correctable (Trivial property).
  This completes the induction, showing that all nodes of the tree are correctable, so in particular $W$ is correctable.
\end{proof}

\noindent This result already allows us to prove Theorem~\ref{thm:main-2}, which we first restate here formally.
\begin{theorem*}[Theorem~\ref{thm:main-2}, formal]
Let $\beta \ge 1$ and $c\in (0,1)$  be constants. 
  There exists a constant $\alpha=\alpha(\beta,c)>0$ such that the following holds for all $k\ge 1$ and positive integers $n$ and $d$.
    Let $\cQ$ be a $\dsl n,k,d \dsr$ quantum code with connectivity graph $G$.
    If the separation profile of $G$ satisfies $s_G(r) \le \beta r^c$ for $c \in (0,1]$, then
    \begin{align}
        d \le \alpha n^c.
    \end{align}
\end{theorem*}
\begin{proof}[Proof of Theorem~\ref{thm:main-2}]
    Applying Lemma \ref{lem:correct-1} to the set of all vertices $V$, with the observation that $\bdry_{+} V = \emptyset$.
    If $d \geq (\varepsilon')^{-1} n^c$, then the set $V$ of all vertices is correctable.
    Then $k = 0$, which contradicts the $k>0$ assumption.
\end{proof}

Returning to Theorem~\ref{thm:main}, we now can use a known result on partitioning graphs into sets with small boundary, Lemma~\ref{lem:rdivision}.
It essentially states that we can obtain the guarantees of Lemma~\ref{lem:bk22-2} while additionally guaranteeing small boundary.
More precisely, we can remove separators until our graph consists of disjoint disconnected components $W_1,\dots,W_{\ell}$ of size $O(d^{1/c})$ and boundary $O(d)$, so that Lemma~\ref{lem:correct-1} applies and $W_1,\dots,W_{\ell}$ are each correctable.
Then by the Union lemma $W_1,\dots,W_{\ell}$ is correctable, and the number of vertices removed is, similar to Lemma~\ref{lem:bk22-2} with $d'=d^{1/c}$, $O(n/d^{(1-c)/c})$.

\begin{lemma}
  Let $\beta \ge 1$ and $c\in (0,1)$ be constants. 
  There exists a constant $\alpha=\alpha(\beta,c)>0$ such that the following holds for all positive integers $n\ge d$.
  Let $G$ be a graph on $n$ vertices where $s_G(r)\le \beta r^c$ for all $r$.
  There exists a $d$-correctable set $A$ such that $|V\setminus A|\le (\alpha n)/(d^{(1-c)/c})$.
  \label{lem:correct-2}
\end{lemma}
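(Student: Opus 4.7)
The plan is to apply the $r$-division Lemma~\ref{lem:rdivision} at scale $r = \Theta(d^{1/c})$, identify the removed set $V\setminus A$ with the union of the inner boundaries produced, and then invoke Lemma~\ref{lem:correct-1} on each remaining piece. Concretely, I would first choose $r$ to be the largest positive integer satisfying both $r \le (\varepsilon' d)^{1/c}$ (so the size hypothesis of Lemma~\ref{lem:correct-1} is in force) and $\alpha_0\beta\, r^c \le d/4$, where $\alpha_0$ denotes the constant from Lemma~\ref{lem:rdivision} (so the boundary hypothesis is in force). Both conditions are satisfied by some $r = \Theta(d^{1/c})$ with implicit constant depending only on $\beta$ and $c$.

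Applying Lemma~\ref{lem:rdivision} yields a partition $V = W_1 \sqcup \cdots \sqcup W_\ell$ with $|W_i| \le r$, $|\bdry_{-} W_i| \le \alpha_0 s_G(r) \le d/4$, and $\ell \le \alpha_0 n/r$. Set $W_i' = W_i \setminus \bdry_{-} W_i$ and $A = \bigsqcup_i W_i'$. The crucial observation is that deleting the inner boundaries simultaneously controls the outer boundaries of the residual pieces and makes them pairwise disconnected: every neighbor of a vertex in $W_i'$ must lie in $W_i$ (since $W_i'$ contains no vertex with a neighbor outside $W_i$, by definition of $\bdry_{-} W_i$), and any such neighbor not in $W_i'$ must lie in $W_i \setminus W_i' = \bdry_{-} W_i$. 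Hence $\bdry_{+} W_i' \subseteq \bdry_{-} W_i$, so $|\bdry_{+} W_i'| \le d/4$, and the same observation shows there are no edges between distinct $W_i'$ and $W_j'$.

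With these two facts in hand, the hypotheses of Lemma~\ref{lem:correct-1} are met for each $W_i'$ (using $|W_i'|\le r\le (\varepsilon'd)^{1/c}$ and $|\bdry_{+} W_i'| \le d/4$), so each $W_i'$ is $d$-correctable. The Union Lemma then yields that $A = \bigsqcup_i W_i'$ is $d$-correctable. The bound on $|V\setminus A|$ follows from
\begin{align*}
|V \setminus A| \;\le\; \sum_{i=1}^\ell |\bdry_{-} W_i| \;\le\; \ell \cdot \alpha_0\beta\, r^c \;\le\; \alpha_0^2\beta \cdot \frac{n}{r^{1-c}} \;=\; O\!\left(\frac{n}{d^{(1-c)/c}}\right),
\end{align*}
with the implicit constant depending only on $\beta$ and $c$, as required.

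I do not anticipate a significant obstacle: the only conceptual point is the observation that deleting the $\bdry_{-} W_i$'s suffices to simultaneously bound the outer boundaries of the residual pieces and disconnect them, which is exactly what the inner-boundary formulation of Lemma~\ref{lem:rdivision} was designed for. The rest is bookkeeping and choosing the scale $r$ so that both hypotheses of Lemma~\ref{lem:correct-1} are met for every piece.
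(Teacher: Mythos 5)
Your proposal matches the paper's proof essentially exactly: both apply the $r$-division lemma at scale $r = \Theta(d^{1/c})$ chosen so that both the size and boundary hypotheses of Lemma~\ref{lem:correct-1} hold, trim the inner boundaries to obtain pairwise-disconnected pieces with small outer boundary, conclude each piece is correctable via Lemma~\ref{lem:correct-1}, apply the Union Lemma, and bound $|V\setminus A|$ by $\ell$ times the per-piece boundary size. The only differences are notational (you swap the roles of $W_i$ and $W_i'$, and you phrase the choice of $r$ as a maximization rather than fixing $\varepsilon=\min(\varepsilon',1/(4\beta\alpha))$ and $r=(\varepsilon d)^{1/c}$), which do not affect the substance.
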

\begin{proof}
  Let $\alpha_{\ref{lem:rdivision}}=\alpha_{\ref{lem:rdivision}}(\beta,c)$ be given by Lemma~\ref{lem:rdivision}.
  Let $\varepsilon'=\varepsilon'(\beta,c)$ be given by Lemma~\ref{lem:correct-1}, and let $\varepsilon = \min(\varepsilon', 1/(4\beta \alpha_{\ref{lem:rdivision}}))$.
  By Lemma~\ref{lem:rdivision} with $r\defeq (\varepsilon d)^{1/c}$, there exists a partition $W_1',\dots,W_\ell'$ of the vertices of $G$ such that
  \begin{itemize}
    \item Each $W_i'$ has size at most $r\le (\varepsilon d)^{1/c}$,
    \item $|\partial_- W_i'|\le \alpha_{\ref{lem:rdivision}} \cdot s_G(r) \le d/4$, and
    \item $\ell\le \alpha_{\ref{lem:rdivision}} \cdot n/d^{1/c}$.
  \end{itemize}
  The bound on $|\bdry_{-} W_i'|$ follows from the assumption that $\varepsilon < 1/(4\beta\alpha_{\ref{lem:rdivision}})$ which implies
  \begin{align}
      \alpha_{\ref{lem:rdivision}} \cdot s_G(r) \leq \alpha_{\ref{lem:rdivision}} \cdot \beta (\varepsilon d) \leq d/4~.
  \end{align}
  For each $i=1,\dots,\ell$, let $W_i\defeq (W_i'\setminus \partial_- W_i')$.
  Note that the outer boundary $\partial_+ W_i$ is contained in $\partial_- W_i'$: if there were an edge $wv$ with $w\in W_i$ and $v$ in $\partial_+ W_i$ but outside of $\partial_- W_i'$, then $v$ is outside $W_i'$, which means $w$ is in $\partial_- W_i$.
  This contradicts the assumption of $w\in W_i$.

  Set $A = \bigcup_{i=1}^\ell W_i$. We show that $A$ is correctable.
  Since $\partial_+ W_i \subset \partial_- W_i'$, we have $|\partial_+ W_i|\le |\partial_- W_i'| \le d/4$.
  Further, $|W_i|\le |W_i'|\le (\varepsilon d)^{1/c}$.
  By assumption, $\varepsilon \leq \varepsilon'$ and therefore, we can apply Lemma~\ref{lem:correct-1} which implies that $W_i$ is correctable.
  By construction, $\partial_+ W_i$ is contained in $W_i'$, which is disjoint from $W_{j}'$, and thus $W_j$, for all $j\neq i$. 
  Hence, $W_i$ is disconnected from $W_j$ for all $i\neq j$. Therefore, by the Union Lemma, $A$ is correctable.

  \noindent To conclude, we bound the size of $V\setminus A$:
  \begin{align}
    |V\setminus A| = \abs{\bigcup_{i=1}^\ell \partial_- W_i'} 
    \le \frac{\alpha_{\ref{lem:rdivision}} \cdot n}{(\varepsilon d)^{1/c}}\cdot  \frac{d}{4}
    = \frac{\alpha n}{d^{(1-c)/c}}.
    \label{}
  \end{align}
  for a constant $\alpha=\alpha(\beta,c)$ depending only on $\beta$ and $c$.
\end{proof}

\noindent We now can prove Lemma~\ref{lem:main}, which implies Theorem~\ref{thm:main} by Lemma~\ref{lem:bptabc} and Corollary~\ref{cor:correctable}.
\begin{proof}[Proof of Lemma~\ref{lem:main}]
  Apply Lemma~\ref{lem:correct-2} to the graph $G$ and let $\alpha_{\ref{lem:correct-2}}$ be the implied constant depedning on $\beta$ and $c$.
  We obtain a correctable set $A$ such that 
  \begin{align}
  |V\setminus A|\le \alpha_{\ref{lem:correct-2}}\cdot  \frac{n}{d^{(1-c)/c}}. 
  \end{align}
  Apply Lemma~\ref{lem:bk22-2} to the subgraph $G[V\setminus A]$ of $G$ induced by $V\setminus A$ (the graph with vertex set $V\setminus A$ whose edges are the edges of $G$ with both endpoints in $V\setminus A$), and let $\alpha_{\ref{lem:bk22-2}}$ be the implied constant.
  Then there exists a partition $V\setminus A = B\sqcup C$ such that $B$ consists of disconnected components of size at most $d-1$ and
  \begin{align}
  |C|\le \alpha_{\ref{lem:correct-2}}\cdot \alpha_{\ref{lem:bk22-2}} \cdot \frac{n}{d^{(1-c)+(1-c)/c}}~.
  \end{align}
  Since these components are disconnected in $G[V\setminus A]$, they are also disconnected in $G$.
  Each component is correctable (Distance property) in Definition~\ref{def:correctable}, so $B$ is correctable.
  Thus, we have found our desired partition.
\end{proof}
\begin{remark}
  To obtain a stronger bound, one might try applying Lemma~\ref{lem:correct-2} twice, so that $|C|\le O(\frac{n}{d^{2(1-c)/c}})$.
  This in particular would match the BPT bound in \cite{bravyi2010tradeoffs}---for example, if $c=1/2$, this would give $|C|\le O(n/d^2)$, which yields the bound $kd^2\le O(n)$.
  However, a second application of Lemma~\ref{lem:correct-2} cannot be done, at least, in a black-box way.
  The issue is in the use of the Expansion Lemma.
  Given correctable sets $U$ and $T$, the Expansion Lemma implies that $U\cup T$ is correctable when $T$ contains the boundary of $U$ \emph{in the original connectivity graph}.
  The proposed second application of Lemma~\ref{lem:correct-2} would not work, because sets that are correctable with respect to the induced subgraph $G[V\setminus A]$ (particularly those that are correctable because of the Expansion Lemma) are not necessarily correctable with respect to the original graph $G$. 
  At the same time, we conjecture that this stronger bound $k \le O(n/d^{2(1-c)/c})$ can be proved with additional ideas. See Conjecture~\ref{conj:code} and Conjecture~\ref{conj:graph}.
\end{remark}

\section{Conclusions}
\label{sec:conclusions}

In this paper, we found a sharper trade-off between $k$ and $d$ for a quantum code $\cQ$ based on the separation profile of its connectivity graph $G$.
We also showed that the distance $d$ is upper bounded in terms of the separator of $G$ and independent of the degree of the graph $G$.
In this way, it improves on the Bravyi-Terhal bound from \cite{bravyi2009no}.

We observe that the bound in Theorem~\ref{thm:main} is still weaker than the Bravyi-Poulin-Terhal bound for $2$-dimensional local codes.
We conjecture that the Bravyi-Poulin-Terhal bound can be generalized completely from geometrically-local codes to a broader class of codes characterized by non-expanding connectivity graphs.

\begin{conjecture}[Graph theory conjecture]
  Let $\beta \ge 1$ and $c\in (0,1)$ be constants. 
  There exists a constant $\alpha=\alpha(\beta,c)$ such that the following holds for all $n\ge d$.
  Let $G$ be a graph with separation profile $s_G(r)\le \beta r^c$ for all $r$.
  Then there exists a partition $A\sqcup B\sqcup C$ of $G$ such that $A$ and $B$ are $d$-correctable with respect to graph $G$ and $|C|\le \alpha n/d^{2(1-c)/c}$.
  \label{conj:graph}
\end{conjecture}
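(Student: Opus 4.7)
The plan is to improve upon Lemma~\ref{lem:main} by iterating the correctability construction of Lemma~\ref{lem:correct-2} a second time, as flagged in the Remark following Lemma~\ref{lem:main}. First I would apply Lemma~\ref{lem:correct-2} to $G$, obtaining a correctable set $A\subset V$ with $|V\setminus A|\le O(n/d^{(1-c)/c})$. For the second step, the goal is to find a correctable $B\subset V\setminus A$ with $|V\setminus(A\cup B)|\le O(n/d^{2(1-c)/c})$. Since subgraphs inherit the separation profile, the natural attempt is to apply Lemma~\ref{lem:correct-2} to the induced subgraph $G[V\setminus A]$, which has $n'=O(n/d^{(1-c)/c})$ vertices and separation profile $s(r)\le \beta r^c$. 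The arithmetic works out: this would yield $B$ with $|V\setminus (A\cup B)|\le O(n'/d^{(1-c)/c}) = O(n/d^{2(1-c)/c})$, matching the conjectured bound.

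The main obstacle, also flagged in the Remark, is that $B$ constructed this way is correctable only in $G[V\setminus A]$, not in $G$. Unpacking Lemma~\ref{lem:correct-2}, the correctability of $B$ ultimately depends on Lemma~\ref{lem:correct-1}, whose inductive Expansion Lemma step requires that certain intermediate sets $U$ satisfy $|\partial_+ U|<d$ in the ambient graph. Since $\partial_+^G U$ may contain vertices in $A$ that are absent from $\partial_+^{G[V\setminus A]} U$, the hypothesis of Lemma~\ref{lem:correct-1} in the full graph $G$ may fail even when it holds in the induced subgraph. In graphs of unbounded degree the gap between these two boundaries can be arbitrarily large, so there is no black-box workaround. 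This is the single technical point that the Remark identifies as the obstruction, and it is where I expect any proof of the conjecture to have to do its real work.

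Overcoming the obstacle appears to require one of two kinds of new ideas. One option is a strengthened version of Lemma~\ref{lem:correct-1} in which the outer boundary of $W$ is allowed to partially lie in a pre-existing correctable region $A$: since subsets of $A$ are correctable by the Trivial property, one might hope to absorb such boundary vertices within the separator-tree induction via the Expansion Lemma. Care is needed because Expansion requires the set being expanded to already be correctable, so the induction must maintain a ``correctable-relative-to-$A$'' invariant rather than correctability per se. A second option is to choose the first-level decomposition $A$ more carefully, for instance via a hierarchical two-scale partition at scales $r_1\sim d^{2/c}$ and $r_2\sim d^{1/c}$, so that for every region $W$ used in the second-level decomposition, $|\partial_+^G W \cap A|$ is provably small relative to $d$. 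Either route requires substantive additions to the present framework, which is why the bound is currently stated only as a conjecture.
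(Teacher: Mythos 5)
This statement is a conjecture, and the paper offers no proof of it; there is therefore no proof to compare your proposal against. You have correctly recognized this, and your proposal is in fact not a proof at all but an articulation of why the natural argument fails --- which is precisely what the paper's Remark after Lemma~\ref{lem:main} does. Your diagnosis of the obstruction is accurate and matches the paper's own: a second application of Lemma~\ref{lem:correct-2} to $G[V\setminus A]$ would produce a set $B$ whose correctability relies on the Expansion Lemma applied in the \emph{induced subgraph}, where $\partial_+^{G[V\setminus A]} U$ can be much smaller than $\partial_+^G U$; since $d$-correctability in Definition~\ref{def:correctable} is defined with respect to boundaries in the full graph $G$, the induced-subgraph argument does not transfer. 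You also correctly note that the degree-unbounded setting rules out crude workarounds bounding $|\partial_+^G U \setminus \partial_+^{G[V\setminus A]} U|$.

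Your two speculative directions (a relative-correctability invariant that absorbs boundary vertices lying in $A$, and a two-scale hierarchical partition coordinating the choice of $A$ with the second-level regions) are plausible research programs but are not developed to the point of constituting a proof, and you do not claim otherwise. One caution on the first direction: the Expansion Lemma as stated requires the set $T \supset \partial_+ U$ to itself be correctable, and while subsets of $A$ are correctable, the union $U \cup (\partial_+ U \cap A) \cup (\partial_+ U \setminus A)$ must still be assembled via the four allowed rules; you would need the pieces to remain mutually disconnected or to apply Expansion in a carefully ordered sequence, and it is not obvious the invariant you sketch closes under the induction. In short: your proposal is a correct assessment of the state of the problem rather than a solution, which is the appropriate conclusion for a statement the paper itself leaves open.
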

 
Using our framework, we can prove the following theorem, which asserts that a stronger bound matching \cite{bravyi2010tradeoffs} follows from Conjecture~\ref{conj:code}.
This theorem is a consequence of Lemma~\ref{lem:bptabc} and Corollary~\ref{cor:correctable}.
\begin{theorem}[Quantum code bound conjecture]
  Assuming Conjecture~\ref{conj:graph}, the following holds.
  Let $\cQ$ be an $\dsl n,k,d \dsr$ quantum code.
  Let $G$ be the associated connectivity graph.
  For positive constants $\beta$ and $c \in (0,1)$, if this graph has a separator $s_G(r) \leq \beta r^c$, then we have the following trade-off between $k$ and $d$:
  \label{conj:code}
  \begin{align}
      kd^{2(1-c)/c} = O(n)~.
  \end{align}
\end{theorem}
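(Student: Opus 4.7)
The plan is to reduce the statement to its graph-theoretic hypothesis by mimicking the three-step reduction that was already used in the paper to deduce Theorem~\ref{thm:main} from Lemma~\ref{lem:main}. The proof is a direct combination of Conjecture~\ref{conj:graph}, Corollary~\ref{cor:correctable}, and Lemma~\ref{lem:bptabc}, with no new geometric or combinatorial input beyond what the conjecture already delivers.

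First I would invoke Conjecture~\ref{conj:graph} on the connectivity graph $G$ of $\cQ$, with the given constants $\beta$ and $c$. This supplies a vertex partition $V = A \sqcup B \sqcup C$ together with the guarantees that $A$ and $B$ are $d$-correctable with respect to $G$ in the sense of Definition~\ref{def:correctable}, and that $|C| \le \alpha n / d^{2(1-c)/c}$ for some constant $\alpha = \alpha(\beta,c)$. Next I would pass from graph-theoretic correctability to code-theoretic correctability via Corollary~\ref{cor:correctable}: that corollary states that the family of all qubit subsets correctable by $\cQ$ is itself a $d$-correctable family, hence contains the inclusion-minimum such family, hence in particular contains $A$ and $B$. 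Therefore $A$ and $B$ are correctable regions for the quantum code in the operational sense required by Lemma~\ref{lem:bptabc}.

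Finally I would apply Lemma~\ref{lem:bptabc} to the decomposition $V = A \sqcup B \sqcup C$ with $A$ and $B$ correctable; this yields $k \le |C| \le \alpha n / d^{2(1-c)/c}$, which rearranges immediately to the claimed bound $k d^{2(1-c)/c} = O(n)$.

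The only conceptual obstacle is the graph-theoretic conjecture itself, which is the hypothesis of the theorem; once it is granted, the reduction to a quantum code bound is purely mechanical and is identical in structure to the unconditional derivation of Theorem~\ref{thm:main} from Lemma~\ref{lem:main}. As a sanity check, substituting $c = 1/2$, which is the separation exponent for two-dimensional geometrically local graphs, gives exponent $2(1-c)/c = 2$, so the conditional bound becomes $k d^2 = O(n)$, recovering the BPT bound of \cite{bravyi2010tradeoffs} and confirming that the conjectured partition (if it exists) is precisely what is needed to close the gap between Theorem~\ref{thm:main} and \eqref{eq:2d-bpt}.
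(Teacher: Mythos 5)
Your proposal is correct and matches the paper's own reasoning exactly: the paper states that this theorem follows from Lemma~\ref{lem:bptabc} and Corollary~\ref{cor:correctable}, which is precisely the three-step reduction you carry out (apply Conjecture~\ref{conj:graph} to obtain $A\sqcup B\sqcup C$, pass graph-correctability of $A$, $B$ to code-correctability via Corollary~\ref{cor:correctable}, then conclude $k\le|C|$ by Lemma~\ref{lem:bptabc}). No differences to report.
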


Another direction is whether we can obtain better bounds using a more fine-grained description of the code.
Our bounds use a coarse description of the code, the connectivity graph, but there are more detailed and natural descriptions.
It is known that any quantum error correcting code can be described by a chain complex \cite{kitaev1997quantum}
\begin{align}
    C_2 \xrightarrow{\bdry_2} C_1 \xrightarrow{\bdry_1} C_0~,
\end{align}
where $C_i$ are $\bbF_2$ vector spaces and $\bdry_j$ are linear maps between spaces such that $\bdry_1 \circ \bdry_2 = 0$.
In this representation, qubits are associated with the elements of $C_1$ and $\ssX$ and $\ssZ$ stabilizer generators are associated with $C_2$ and $C_0$ respectively.
The relationship between the stabilizer generators and the qubits is specified by the boundary operators.
The adjacency matrix of the connectivity graph we use here is some function of \emph{both} $\bdry_2$ and $\bdry_1$.
It would be interesting to use graph invariants of the bipartite graphs specified by $\bdry_2$ and $\bdry_1$ to arrive at better bounds and tighter trade-offs.

\section{Acknowledgements}

NB is supported by the Australian Research Council via the Centre of Excellence in Engineered Quantum Systems (EQUS) project number CE170100009, and by the Sydney Quantum Academy.

\noindent VG is supported in part by a Simons Investigator award, a UC Berkeley Initiative for Computational Transformation award, and NSF grant CCF-2210823.

\noindent AK is supported by the Bloch Postdoctoral Fellowship from Stanford University and NSF grant CCF-1844628.

\noindent RL acknowledges support from the NSF Mathematical Sciences Postdoctoral Research Fellowships Program under Grant DMS-2203067, and a UC Berkeley Initiative for Computational Transformation award.

\bibliographystyle{abbrv}
\bibliography{references}

\end{document}